\documentclass[conference]{IEEEtran}
\makeatletter
\def\ps@headings{%
\def\@oddhead{\mbox{}\scriptsize\rightmark \hfil \thepage}%
\def\@evenhead{\scriptsize\thepage \hfil \leftmark\mbox{}}%
\def\@oddfoot{}%
\def\@evenfoot{}}
\makeatother
\pagestyle{headings}

\usepackage{amsfonts}
\usepackage{times}
\usepackage{latexsym}
\usepackage{amssymb}
\usepackage{amsmath}
\usepackage{cite}
\usepackage{verbatim}

\newcommand{\bydef}{\triangleq}

\def\bydef{:=}

\def\bb0{{\mathbb{0}}}

\def\bydef{:=}

\def\bb{{\mathbf{b}}}

\def\bh{{\mathbf{h}}}

\def\bs{{\mathbf{s}}}

\def\b0{{\mathbf{0}}}


\def\bB{{\mathbf{B}}}

\def\bH{{\mathbf{H}}}

\def\bS{{\mathbf{S}}}


\def\bbE{{\mathbb{E}}}

\def\bbN{{\mathbb{N}}}

\def\bbR{{\mathbb{R}}}



\def\bydef{:=}

\def\sf0{{\mathsf{0}}}

\def\nn{\nonumber}

\usepackage{graphicx}
\usepackage{amssymb}
\usepackage{amsfonts}
\usepackage{amsmath}
\usepackage{latexsym}
\usepackage{epstopdf}


\begin{document}

\newtheorem{thm}{Theorem}
\newtheorem{lemma}{Lemma}
\newtheorem{rem}{Remark}
\newtheorem{exm}{Example}
\newtheorem{prop}{Proposition}
\newtheorem{defn}{Definition}
\newtheorem{cor}{Corollary}
\def\proof{\noindent\hspace{0em}{\itshape Proof: }}
\def\endproof{\hspace*{\fill}~\QED\par\endtrivlist\unskip}
\def\bh{{\mathbf{h}}}
\def\SIR{{\mathsf{SIR}}}
\def\SINR{{\mathsf{SINR}}}

\title{Percolation and Connectivity on the Signal to Interference Ratio Graph}
\author{
Rahul~Vaze\\ School of Technology and Computer Science,\\ Tata Institute of Fundamental Research, \\ Homi Bhabha Road, Mumbai 400005, \\vaze@tcs.tifr.res.in. }

\date{}
\maketitle
\noindent
\begin{abstract}
A wireless communication network is considered where any two nodes are connected if the  signal-to-interference ratio (SIR) between them is 
greater than a threshold. Assuming that the nodes of the wireless network are distributed as a Poisson point process (PPP), 
percolation (formation of an unbounded connected cluster) on the resulting SIR graph is studied as a function of the density of the PPP. It is shown that for a small enough threshold, there exists a closed interval of densities for which percolation happens with non-zero probability. Conversely, it is  shown that for a large enough threshold, there exists a closed interval of densities for which the probability of percolation is zero. 
Connectivity properties of the SIR graph are also studied by restricting all the nodes to lie in a bounded area. 
Assigning separate frequency bands or time-slots proportional to the logarithm of the number of nodes to different nodes for transmission/reception is shown to be necessary and sufficient for guaranteeing connectivity in the SIR graph.
\end{abstract}

\section{Introduction} 
Consider  a large ad-hoc wireless network, where multiple transmitter receiver pairs communicate simultaneously in an
uncoordinated manner without the help of any fixed infrastructure. Important examples of ad-hoc networks
include vehicular networks, military and emergency networks, and sensor networks. The uncoordinated
nature of communication allows multiple transmitters to communicate at the same time, however,  
creates interference at all receivers. A common connection model in an ad-hoc network is the signal-to-interference ratio (SIR) model,\footnote{Ignoring the additive noise in an interference limited system.} where two nodes are connected if the SIR between them is greater than a threshold. In this paper we are interested in studying 
the probability of the formation of an unbounded connected cluster with the SIR model in an ad-hoc network. The study is motivated by the fact that the 
presence of unbounded connected clusters guarantees long range connectivity using multi-hop routing in an ad-hoc wireless network.

A natural tool to study the formation of unbounded connected clusters in a graph associated with a wireless network is  percolation theory 
 \cite{BookRoy}, where  percolation is defined as the event that there exists an unbounded connected cluster in a graph. 
Previously, assuming the location of nodes of the wireless network to be distributed as a Poisson point process (PPP) with density $\lambda$, percolation has been studied for the Boolean model \cite{Gilbert1961}, where two nodes are connected if the two circles drawn around them with a fixed radius overlap, for the random Boolean model \cite{BookRoy, Gourre2008}, where two nodes are connected if the two circles drawn around them with a random radius overlap, for the random connection model \cite{Penrose1991}, where two nodes are connected  with some probability which depends on the distance between them independently of other nodes. 
For all these connection models, a phase transition behavior has been established in \cite{BookRoy, BookPenrose, Gourre2008, Gilbert1961,Penrose1991}, i.e. there exists a critical density $\lambda_c$, where if $\lambda < \lambda_c$, then the probability of percolation is zero, while if $\lambda > \lambda_c$ then percolation happens almost surely. In other words, percolation is shown to be monotonic in $\lambda$ \cite{BookRoy, Gourre2008, Gilbert1961,Penrose1991}.

The most relevant work to the present paper is \cite{Dousse2006}, (an improved version of \cite{Dousse2005}) where percolation on the SINR graph (constructed from an underlying wireless network with nodes distributed as a PPP) has been studied. In \cite{Dousse2006}, the SINR graph is defined to be $\{\Phi, {\cal E}\}$, where $\Phi$ is the set of nodes, and the edge set ${\cal E} = \{(x_i,x_j) \ : \ \SINR_{ij} > T\}$, with $\SINR_{ij} \bydef  \frac{ g(d_{ij}) } 
{\sigma^2 + \sum_{k\in \Phi, k\ne i }\gamma g(d_{kj})}$, where $d_{kj}$ is the distance between nodes $x_k$ and $x_j$, $g(.)$ is the signal attenuation 
function, 
$\sigma^2$ is the variance of the AWGN, $T$ is the connection threshold, and $\gamma > 0$ is an interference suppression parameter that depends on the wireless technology e.g. CDMA.  In \cite{Dousse2006}, it is shown that if  $\lambda_c$ is the critical density with $\gamma=0$, then there exists a $\gamma* >0$, such that for any $\lambda > \lambda_c$, percolation happens in the SINR graph for $\gamma < \gamma^*$. Lower and upper bounds on 
$\gamma^*$ have been obtained in \cite{Yeh2007}.
Thus, \cite{Dousse2006} 
shows that there exists a small enough $\gamma$ for which the percolation properties of the SINR graph are similar to $\gamma=0$. Note that with respect to $\gamma$, SIR graph percolation is monotonic, since if percolation happens for $\gamma_0$, then percolation happens for all $\gamma < \gamma_0$. Even though \cite{Dousse2006}
provides key insights into the percolation properties of the SINR graph, however, its scope is limited since assuming arbitrarily small enough $\gamma$ is not feasible from any wireless technology perspective. 

In this paper we consider $\gamma =1$,  and ignore the additive noise contribution, since with $\gamma=1$, the system is interference limited. Ignoring the noise contribution, with $\gamma =1$, $\SINR_{ij} = \SIR_{ij} \bydef  \frac{ g(d_{ij}) } 
{ \sum_{k\in \Phi, k\ne i } g(d_{kj})}$. Assuming that the nodes of $\Phi$ are distributed as a homogenous PPP, in this paper we are interested in finding the range of $\lambda$'s for which percolation happens in the SIR graph.  


As discussed before, typically, continuum percolation exhibits phase transition behavior and is monotonic in the quantity of interest, e.g. monotonic in  $\lambda$ \cite{Gilbert1961, Penrose1991}, monotonic in  $\gamma$ \cite{Dousse2006}.
The continuum percolation on the SIR graph, however, does not seem to be monotonic in $\lambda$.
To illustrate this, let  percolation happen for some value of $\lambda$, say $\lambda_0$.  
Then increasing $\lambda$ beyond $\lambda_0$, the distance between the nodes decreases, and hence both the signal and the interference powers
increase simultaneously. Thus, it is difficult to establish that percolation happens for any $\lambda>\lambda_0$ for a fixed $T$. The only cases where it is trivial to establish whether percolation happens or not are: $\lambda=0$, or $T=\infty$, (no percolation) and $T=0$ (percolation). 
Moreover, it is also not obvious whether percolation happens for any value of $\lambda$ for a fixed $T$.

In this paper  for the path-loss model, we show that for large enough $T$, there exists a closed interval $\Lambda_T^{l} \bydef [\lambda_T^{l1} \ \lambda_T^{l2}]$, such that  if $\lambda \in \Lambda_T^{l}$, then the probability of percolation is zero (sub-critical regime). In \cite{Dousse2005}, where a link between $x_i$ and $x_j$ is defined in the SINR graph if both $\SINR_{ij}$ and $\SINR_{ji}$ are greater than the same threshold $T$, it is shown that if $\gamma > \frac{1}{T}$, then the probability of percolation is zero.  In this paper, we consider that a link exists between 
$x_i$ and $x_j$ in the SIR graph if $\SIR_{ij} >T$, which is a relaxed condition compared to \cite{Dousse2005}, and 
consequently the analysis and results of  \cite{Dousse2005} cannot be used to derive bounds on the sub-critical regime.

Conversely, we show that for small enough $T$, there exists a closed interval $\Lambda_T^{u}\bydef [\lambda_T^{u1} \ \lambda_T^{u2}]$, such that  if  $\lambda \in \Lambda_T^{u}$, then the percolation happens with non-zero probability. Our result loosely establishes continuity of percolation at $T=0$, since at $T=0$ percolation happens for all non-zero values of $\lambda$. One might argue that a small enough $T$ is also not practical, since the rate of transmission 
between any pair of nodes depends on $T$.  
Our result essentially establishes that for some node intensities, an infinite connected component can be formed in a wireless network, where 
each link has a small rate of transmission. For example, in a delay tolerant network, where reliability is more important than the rate of information transfer, our results show that large data transfers can be made  to a large enough number of nodes by using low rate links with strong error correcting codes.

Even though percolation guarantees the formation of unbounded clusters, it does not ensure connectivity between any two nodes of the network. In a wireless network, connectivity is quite critical, and studying connectivity properties of large networks (formally defined to be event that there is a path between any pair of nodes) has received a lot of attention in literature \cite{Gupta1998, Avin2010, Gupta2000,Blaszczyszyn2010, Penrose1999, DousseBaccelli05, BookRoy}, primarily for the Boolean model of connectivity. For studying connectivity in the SIR graph, we restrict ourselves to a finite area, to be precise an unit square, since the probability of connectivity when nodes are distributed on an infinite plane is zero. We assume that there are $n$ nodes lying in the unit square that are independently drawn from an uniform distribution over the unit square. We consider the case when $C(n)$ separate frequency bands/time slots (called colors) are used by the $n$ nodes for transmission and reception, where only signals belonging to the same color 
interfere with each other. 
We show that $C(n) = \kappa \log n$ ($\kappa$ is a constant) is necessary and sufficient for ensuring the connectivity of the SIR graph with high probability. The result suggests that if there are order $\frac{n}{\log n}$ interferers for any receiving node, then the SIR between a large number of node pairs can be guaranteed to be above a constant threshold. 


{\it Notation:}
The expectation of function $f(x)$ with respect to $x$ is denoted by
${\bbE}(f(x))$.
A circularly symmetric complex Gaussian random
variable $x$ with zero mean and variance $\sigma^2$ is denoted as $x
\sim {\cal CN}(0,\sigma^2)$.  $(x)^+$ denotes the function $\max\{x,0\}$. $|S|$ denotes the cardinality of set $S$. The complement of set $S$ is denoted by $S^c$. $S_2\backslash S_1$ represents the elements of $S_2$ that are not in its subset $S_1$. We denote the 
origin by ${\bf 0}$. 
A ball of radius $r$ centered at $x$ is denoted by $\bB(x,r)$.  The set $\{1,2,\ldots, N\}$ is denoted by $[N]$.
 We use the symbol
$\bydef$  to define a variable. We define 
$f(n) = {\cal O}(g(n))$ if $\exists \ k > 0, \ n_0, \ \forall \ n>n_0$, $|f(n)| \le |g(n)| k$.

\section{System Model}
\label{sec:sys}
Consider a wireless network with the set of nodes denoted by $\Phi$. 
For $x_i, x_j \in \Phi$, let $d_{ij}$ denote the distance between $x_i$ and $x_j$. We assume that if power $P$ is transmitted by node $x_i$, then the received signal power at $x_j$ is $Pg(d_{ij})$, where $g(.)$ is the monotonically decreasing signal attenuation function with distance.\footnote {The most commonly found signal attenuation function in literature is $g(x) = x^{-\alpha}$, however, it is singular at distances close to zero, and results in $\int_0^{\infty} xg(x)  dx = \infty$. However, owing to simplicity of exposition, we use $g(x) = x^{-\alpha}$, except for Subsection \ref{sec:plsupc} and Subsection \ref{sec:lbconn}, where any monotonically decreasing $g(.)$  with 
$\int_0^{\infty} xg(x)  dx < \infty$ is considered.} 
With concurrent transmissions from all nodes of $\Phi$, the received signal at $x_j$ at any time is  
\begin{equation}
\label{rxsig}
r_j = \sum_{k\in \Phi, k\ne i }\sqrt{Pg(d_{kj})}s_k + v_{j}, 
\end{equation}
where $s_k$ is signal transmitted from node $x_k$, $P$ is the power transmitted by each node, and $v_{j}$ is the AWGN with ${\cal CN}(0,1)$ distribution. Note that this is an interference limited system, and we drop the contribution of the AWGN in the sequel. From (\ref{rxsig}), the SIR for the $x_i$ to $x_j$ 
communication is $\SIR_{ij} \bydef  \frac{g(d_{ij})} 
{\sum_{k\in \Phi, k\ne i }g(d_{kj})}$.
We consider the SIR graph  of \cite{Dousse2006}, where an edge between $x_i$ and $x_j$,  $x_i,x_j\in\Phi$, exists if 
the SIR between $x_i$ and $x_j$, $\SIR_{ij}$, is greater than a threshold  $T$.


\begin{defn} SIR graph is a directed graph $SG(T) \bydef \{\Phi, {\cal E}\}$, with vertex set $\Phi$, and edge 
set ${\cal E} \bydef \{(x_i, x_j): \SIR_{ij}\ge T\}$, where $T$ is the SIR threshold required for correct decoding required between any two nodes of $\Phi$. 
\end{defn}

\begin{defn} We define that there is a {\it path} from node $x_i$ to $x_j$ if there is a path from $x_i$ to $x_j$ in the $SG(T)$.  A path between $x_i$ and $x_j$ on $SG(T)$ is represented as $x_i \rightarrow x_j$.
\end{defn}

\begin{defn} We define that a node $x_i$ can {\it connect} to  $x_j$ if there is an edge between $x_i$ and $x_j$ in the $SG(T)$.\end{defn}


Similar to \cite{Dousse2006}, in this paper we assume that the locations of $\Phi$  are distributed as a homogenous 
Poisson point process (PPP) with 
density $\lambda$. 
The SIR graph  when $\Phi$ is  distributed as a PPP is referred to as the Poisson SIR graph (PSG).
We define the connected component of any node $x_j \in \Phi$, as $C_{x_j} \bydef   \{x_k \in \Phi, x_j\rightarrow x_k\}$, with cardinality  $|C_{x_j}|$. Note that because of stationarity of the PPP, the distribution of $|C_{x_j}|$ does not depend on $j$, and hence without loss of generality from here on we consider node $x_1$ for the purposes of defining connected components.

In this paper we are interested in studying the percolation properties of the PSG. In particular, 
we are interested in finding the values of $\lambda$ for which the probability of having an unbounded connected component in PSG is greater than zero, i.e. $\lambda_p \bydef  \{{\lambda}:P(|{\cal C}_{x_1}| = \infty)>0\}$. 
The event  $\{|{\cal C}_{x_1}| = \infty\}$ is also referred to as {\it percolation} on PSG, and we say that percolation happens if 
$P(\{|{\cal C}_{x_1}| = \infty\})>0$, and does not happen if $P(\{|{\cal C}_{x_1}| = \infty\})=0$. Ideally, we would like to find sharp cutoff $\lambda_c$ for $\lambda$ as a function of $T$, such that for $\lambda > \lambda_c$ $P(\{|{\cal C}_{x_1}| = \infty\})>0$, while with $\lambda \le \lambda_c$ $P(\{|{\cal C}_{x_1}| = \infty\})=0$. This problem, however, is quite challenging, and in this paper we only establish that for large enough $T$ there exists a closed interval $\Lambda_T^{l} \bydef [\lambda_T^{l1} \ \lambda_T^{l2}]$, such that  if $\lambda \in \Lambda_T^{l}$ then the probability of percolation is zero, while for small enough $T$ there exists a closed interval $\Lambda_T^{u}\bydef [\lambda_T^{u1} \ \lambda_T^{u2}]$, such that  if  $\lambda \in \Lambda_T^{u}$ 
then the probability of percolation is greater than zero.

\begin{rem} Note that we have defined PSG to be  a directed graph, and the component of $x_1$ is its out-component, i.e. the set of nodes with which $x_1$ can communicate. 
Since $x_i \rightarrow x_j, \ x_i,x_j\in \Phi$, does not imply $x_j \rightarrow x_i \ x_i,x_j\in \Phi$,  one can similarly define in-component 
$C_{x_j}^{in} \bydef   \{x_k \in \Phi, x_k\rightarrow x_j\}$,  bi-directional component $C_{x_j}^{bd} \bydef   \{x_k \in \Phi, x_k\rightarrow x_j\ \text{and} \ x_k\rightarrow x_j\}$,  and either one-directional component $C_{x_j}^{ed} \bydef   \{x_k \in \Phi, x_k\rightarrow x_j \ \text{or} \ x_k\rightarrow x_j \}$.  
\end{rem}

 
  \section{Percolation on the SIR graph} 
  \label{sec:pl}
In this section, we first discuss the sub-critical regime where the probability of percolation is zero, and then follow it up the super-critical regime where the probability of percolation is greater than zero.
  
\subsection{Sub-critical regime}
\label{sec:plsubc} For simplicity of exposition, in this subsection we assume that the signal attenuation function $g(d_{ij}) = d_{ij}^{-\alpha}$, where $\alpha >2$ is the path-loss exponent. The results of this subsection can be extended to  any 
signal attenuation function $g(.)$ that is monotonically decreasing and has $\int xg(x)dx < \infty$. Thus, in this case, $\SIR_{ij} =  \frac{d_{ij}^{-\alpha}}{\sum_{k\in \Phi, k\ne i }d_{kj}^{-\alpha}}$. Let $I_j^i \bydef \sum_{k\in \Phi, k\ne i}d_{kj}^{-\alpha}$, then  $PSG = \{\Phi, {\cal E}\}$, where the edge set ${\cal E} = \left\{(x_i, x_j): d_{ij}\le \left(\frac{1}{T I^i_j}\right)^{\frac{1}{\alpha}}\right\}$.  In this subsection we are interested in deriving that for large enough $T$ there exists a closed interval $\Lambda_T^{l} \bydef [\lambda_T^{l1} \ \lambda_T^{l2}]$, such that  if $\lambda \in \Lambda_T^{l}$ then the probability of percolation is zero.

\begin{figure}
\centering
\includegraphics[width=3.5in]{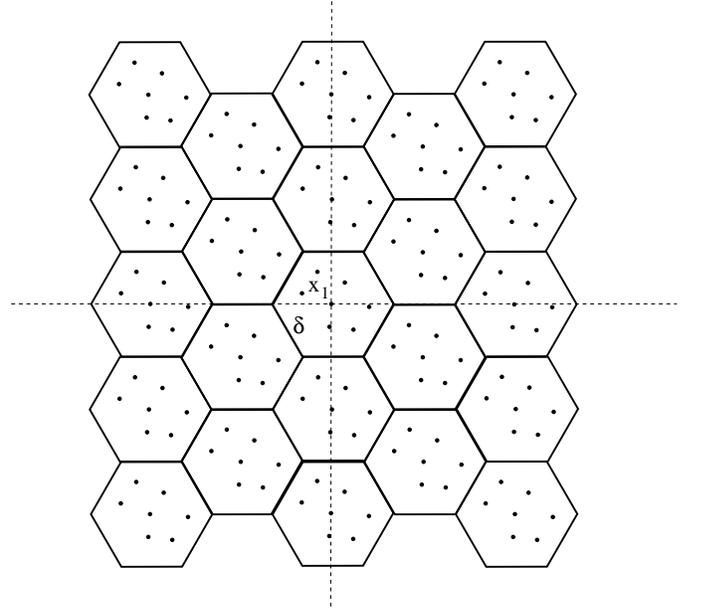}
\caption{Two dimensional hexagonal lattice with edge $\delta$.}
\label{fig:lattice}
\end{figure}

\begin{figure}
\centering
\includegraphics[width=3in]{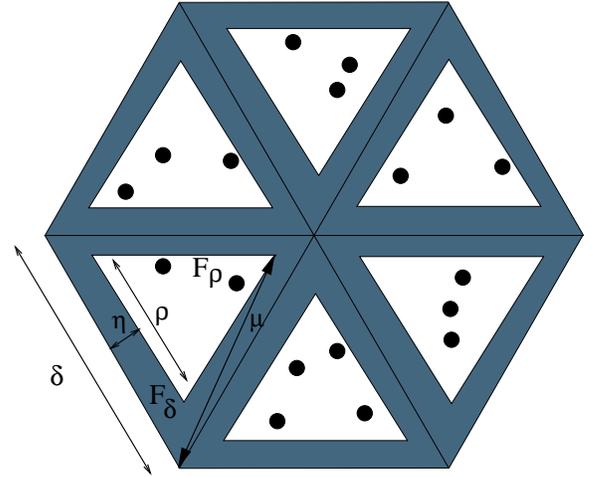}
\caption{Closed face of the hexagonal lattice.}
\label{fig:closedface}
\end{figure}

\begin{figure}
\centering
\includegraphics[width=3.5in]{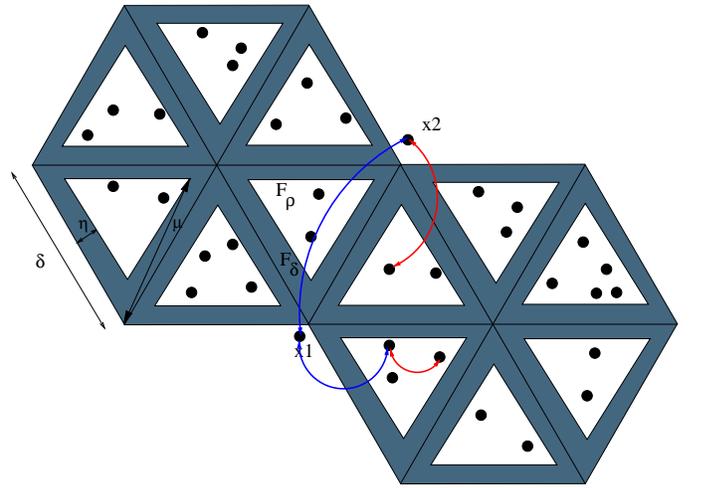}
\caption{Node disconnection  because of closed face.}
\label{fig:closedfaceeffect}
\end{figure}

\begin{figure}
\centering
\includegraphics[width=3.5in]{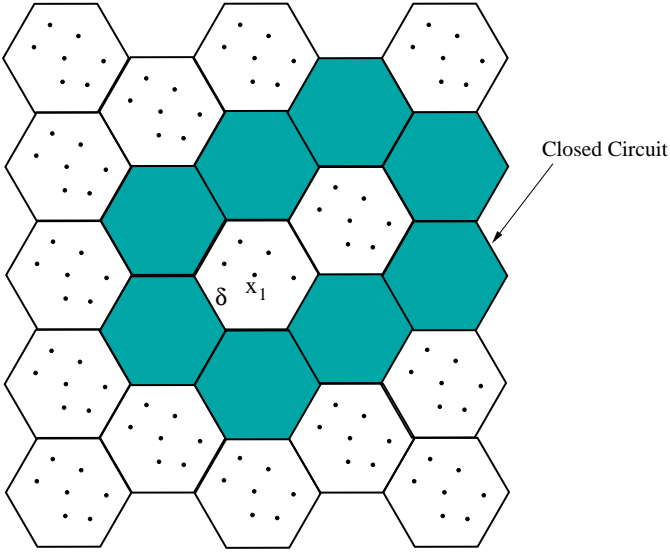}
\caption{Connected node partitioning because of a closed circuit.}
\label{fig:closedcircuit}
\end{figure}

Towards that end, we tile $\bbR^2$ 
using an hexagonal lattice $\bH$ with edge $\delta$ as shown in Fig. \ref{fig:lattice}. We let node $x_1$ to lie on the origin of $\bH$. 
Each face of the hexagonal lattice has two states, either {\it open} or {\it closed}. As shown in Fig. \ref{fig:closedface},
a face of $\bH$ is defined to be closed if each of the six equilateral triangles $F_{\delta}$ inside each face are closed, and  $F_{\delta}$ is defined to be  closed if  
\begin{enumerate}
\item there is no node of $\Phi$ in the shaded region $F_{\delta} \backslash F_{\rho}$.
\item there are at least two nodes of $\Phi$ in the inner equilateral triangle $F_{\rho}$.
\item $\rho \le \eta  T^{\frac{1}{\alpha}}$.
\item $\mu \le \delta T^{\frac{1}{\alpha}}$.
\end{enumerate} 

With these definitions we can map the continuum percolation on the PSG to discrete percolation on the hexagonal lattice. 
Conditions 1) and 4) together imply that no two nodes on either side of a closed face of $\bH$ can have an edge between each other. To see this, let $x_i$ lie on the left side of any closed face of $\bH$ and $x_j$ lie on the right of the closed face. See Fig. \ref{fig:closedfaceeffect} for a pictorial description.
Then clearly,  the maximum signal power between $x_i$ and $x_j$ is  $\delta^{-\alpha}$. Moreover, the interference received at either $x_i$ or $x_j$ from the nodes inside $F_{\rho}$ is greater than $\delta^{-\alpha}/T$, since $\mu \le \delta T^{\frac{1}{\alpha}}$. Thus, $\SIR_{ij} < T$ and $\SIR_{ji} < T$, and hence $x_i$ and $x_j$ cannot connect to each other. 
Similarly,  conditions 2) and 3) imply that $x_i$ or $x_j$  cannot connect to any of the nodes inside $F_{\rho}$, since $\rho \le \eta  T^{\frac{1}{\alpha}}$.  \begin{defn} A circuit in $\bH$ is a sequence of consecutive faces of $\bH$ such that the first and last face of the sequence have a common edge. A circuit in $\bH$ is defined to be open/closed if all the faces of the circuit are open/closed in $\bH$. A closed circuit is illustrated in Fig. \ref{fig:closedcircuit}.
\end{defn}

Thus, if there is a closed circuit  around the origin, then nodes of $\Phi$ lying inside the closed circuit cannot connect to any nodes of $\Phi$ outside 
the closed circuit as shown in Fig. \ref{fig:closedcircuit}. Therefore if there exists a closed circuit around the origin a.s., then a.s. there is no percolation, since infinitely many nodes of $\Phi$ cannot lie in a bounded area (inside of the closed circuit). From \cite{Grimmett1980}, we know that for a hexagonal lattice, where the probability of any face being open/closed is independent, if $P(\text {closed face}) > \frac{1}{2}$, then there exists a closed circuit around the origin a.s.. Next, we show that for large enough $T$, the probability of a face being closed is greater than $\frac{1}{2}$ when $\lambda$ lies in a  closed interval.

\begin{thm} For the PSG, $\exists \ T^{\star}$ such that for $T > T^{\star}$, $\exists$ $\Lambda_T^l=[\lambda_T^{l1} \ \lambda_T^{l2}] \subset \bbR$, such that if 
$\lambda \in \Lambda_T^l$, then the probability of percolation is  zero. 
\end{thm}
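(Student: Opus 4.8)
The plan is to finish the reduction to discrete face percolation that is set up just above the statement, and then to tune the lattice parameters $(\delta,\rho)$ so that a single face of $\bH$ is closed with probability exceeding $\tfrac12$ on a whole interval of densities. Since the faces of $\bH$ partition $\bbR^2$ and the state of a face depends only on the points of $\Phi$ falling inside it, distinct faces have independent states; moreover a face is the disjoint union of six copies of $F_\delta$, so by the independence of a PPP over disjoint regions $P(\text{face closed}) = P(F_\delta\ \text{closed})^6$. Hence it suffices to produce a range of $\lambda$ for which $P(F_\delta\ \text{closed}) > 2^{-1/6}$: then $P(\text{face closed}) > \tfrac12$, and the circuit result of \cite{Grimmett1980} quoted above gives a closed circuit around the origin almost surely, which (by the blocking argument established before the statement) confines the out-component of $x_1$ to a bounded region and rules out percolation.

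First I would evaluate $P(F_\delta\ \text{closed})$. Conditions 1) and 2) concern the disjoint sets $F_\delta\setminus F_\rho$ and $F_\rho$, so they are independent events, and with $|F_\rho| = \tfrac{\sqrt3}{4}\rho^2$ and $|F_\delta\setminus F_\rho| = \tfrac{\sqrt3}{4}(\delta^2-\rho^2)$,
\begin{equation}
P(F_\delta\ \text{closed}) = e^{-\lambda |F_\delta\setminus F_\rho|}\left(1 - e^{-\lambda|F_\rho|}\bigl(1+\lambda|F_\rho|\bigr)\right).
\end{equation}
Conditions 3) and 4) do not enter this probability at all; they are deterministic and only certify the blocking geometry. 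I would then fix the shape by setting $\delta = c_\delta\,T^{1/\alpha}$ and $\rho = c_\rho\,T^{1/\alpha}$ with constants $0 < c_\rho < c_\delta$, so that the closure probability depends on $\lambda$ and $T$ only through the single variable $u \bydef \lambda T^{2/\alpha}$:
\begin{equation}
p(u) = e^{-Bu}\left(1 - e^{-Au}(1+Au)\right), \qquad A = \tfrac{\sqrt3}{4}c_\rho^2,\quad B = \tfrac{\sqrt3}{4}(c_\delta^2-c_\rho^2).
\end{equation}

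Next I would dispatch the side conditions 3) and 4). The distances $\eta$ and $\mu$ are fixed geometric functions of the configuration and therefore scale like $\delta$ and $\rho$, i.e. like $T^{1/\alpha}$; after the substitution above, the requirements $\rho \le \eta T^{1/\alpha}$ and $\mu \le \delta T^{1/\alpha}$ each acquire a spare factor $T^{1/\alpha}$ on their right-hand sides, so for any fixed $(c_\rho,c_\delta)$ they hold once $T$ exceeds some $T^{\star}$. This is the only place where ``large enough $T$'' is used, and $T^{\star}$ is pinned down after the shape is chosen. It then remains to exhibit an interval on which $p(u) > 2^{-1/6}$. Since $p$ is continuous, $p(0)=0$, $p(u)\to 0$ as $u\to\infty$, and $p>0$ on $(0,\infty)$, the super-level set $\{u: p(u)>2^{-1/6}\}$ is a nonempty open set as soon as $\max_u p(u) > 2^{-1/6}$, and it then contains a closed interval $[u_1,u_2]$. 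A modest shape already does this: taking the inner triangle large relative to the annulus (large $c_\rho$, small $c_\delta-c_\rho$) makes the bracket saturate near $1$ while $e^{-Bu}$ stays near $1$ over a band of $u$, so $\max_u p(u)$ can be pushed above $2^{-1/6}$. With such a shape, for every $T>T^{\star}$ the density interval $\Lambda_T^l = [u_1 T^{-2/\alpha},\,u_2 T^{-2/\alpha}]$ gives $P(\text{face closed}) = p(\lambda T^{2/\alpha})^6 > \tfrac12$, which completes the argument.

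The step I expect to be the crux is reconciling the two opposing demands inside $p(u)$: the empty-annulus event (condition 1) wants $\lambda$ small, whereas the two-points-in-$F_\rho$ event (condition 2) wants $\lambda$ large, and this very competition is what makes $p$ vanish at both ends of the $u$-axis and what makes SIR-graph percolation non-monotone in $\lambda$. The delicate point is therefore to check that a shape with $\max_u p(u)>2^{-1/6}$ can be chosen that is simultaneously \emph{admissible} for the blocking, i.e. for which the distance bound $d_{ij}\ge\delta$ and the interference bound behind conditions 1) and 4) still hold; since those bounds rely only on the edge-length normalization and on $\mu\le\delta T^{1/\alpha}$ (whose slack grows like $T^{1/\alpha}$), I would verify that enlarging $c_\rho$ while keeping $c_\delta$ a fixed multiple of it leaves the SIR inequalities intact, thereby decoupling the probabilistic content (a function of $u$ alone) from the geometric content (satisfied for all large $T$).
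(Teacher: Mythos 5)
Your proof is correct and takes essentially the same route as the paper: the hexagonal-lattice construction with the face-closure conditions 1)--4), the factorization $P(\text{closed face}) = P(\text{closed}\ F_\delta)^6$ with the identical explicit expression, and the cited Grimmett criterion that a closed-face probability above $\tfrac{1}{2}$ forces a closed circuit around the origin a.s. The only difference is one of precision: where the paper argues informally that ``$\eta$ can be chosen small enough'' and ``$\delta$ large enough,'' you scale $\delta,\rho \propto T^{1/\alpha}$ so that the closure probability becomes a single-variable function $p(u)$ of $u=\lambda T^{2/\alpha}$, which makes the existence of the closed interval $\Lambda_T^l=[u_1T^{-2/\alpha},\,u_2T^{-2/\alpha}]$ and the role of $T^{\star}$ (needed only for the deterministic side conditions 3) and 4)) cleaner and fully explicit.
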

\begin{proof}
Recall from conditions 1)-4), $P(\text{closed face}) = P(\text{closed} \ F_{\delta})^6$. Note that 
\begin{eqnarray}\nn
P(\text{closed} \ F_{\delta}) &=& P(|F_{\delta} \backslash F_{\rho}|=0, |F_{\rho}|>1,  \\ \nn
&& \rho \le \eta  T^{\frac{1}{\alpha}}, \mu \le \delta T^{\frac{1}{\alpha}}), \\ \nn
&=& P(|F_{\delta} \backslash F_{\rho}|=0)P(|F_{\rho}|>1), \\ \nn
&&  \rho \le \eta  T^{\frac{1}{\alpha}}, \mu \le \delta T^{\frac{1}{\alpha}}), \\ \nn 
&=& e^{{-\lambda}\nu(F_{\delta} \backslash F_{\rho})} \\ \label{eq:closedface}
&&\left[1- e^{{-\lambda}\nu(F_{\rho})} - \lambda\nu(F_{\rho})e^{{-\lambda}\nu(F_{\rho})}\right], \\ \nn 
&&   \rho \le \eta  T^{\frac{1}{\alpha}}, \mu \le \delta T^{\frac{1}{\alpha}},
\end{eqnarray}
where $\nu(.)$ represents the Lebesgue measure on $\bbR^2$. Note that $\mu \le \delta T^{\frac{1}{\alpha}}$ is 
automatically satisfied for $T>1$, since $\mu \le \delta$ by construction. 
For large enough $T$, $\eta$ can be chosen small enough to make $\nu(F_{\delta} \backslash F_{\rho})$ small enough with $\rho \le \eta  T^{\frac{1}{\alpha}}$. From (\ref{eq:closedface}), it follows that if $\nu(F_{\delta} \backslash F_{\rho})$ is small enough, for large enough $\delta$, there exists $\Lambda_T^l = [\lambda_T^{l1} \ \lambda_T^{l2}] \subset \bbR$, where if  $\lambda \in \Lambda_T^l$, then $P(\text{closed} \ F_{\delta}) >(\frac{1}{2})^{\frac{1}{6}}$. Thus, we have shown that for a large enough $T$, there exists a closed interval, such that if $\lambda$ belongs to the closed interval then the $P(\text{closed face}) > \frac{1}{2}$, and hence  the probability of percolation is zero.
\end{proof}

{\it Discussion:} In this section we mapped the continuum percolation on the SIR graph into discrete percolation on the hexagonal lattice to make use of the known results on the discrete percolation on the hexagonal lattice. 
It is well known that if the probability of any hexagonal face being closed is more than $\frac{1}{2}$, then almost surely, the connected component of hexagonal lattice is finite. Then we showed that with our mapping, for large enough $T$, 
the probability of a closed face of the hexagonal lattice can be made more than $\frac{1}{2}$ for a closed interval of node densities, and hence almost surely the connected component of the SIR graph is finite.

\begin{figure}
\centering
\includegraphics[width=3.5in]{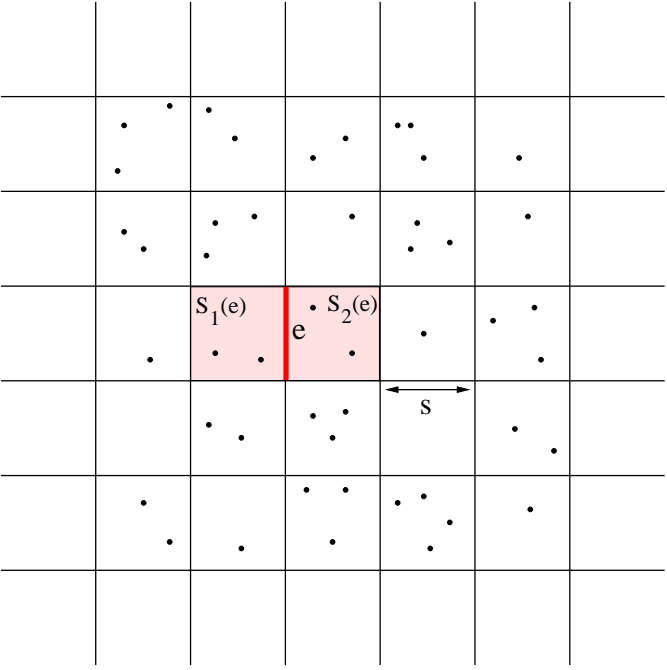}
\caption{Two-dimensional square lattice with edge $s$.}
\label{fig:squarelattice}
\end{figure}

\begin{figure}
\centering
\includegraphics[width=3.5in]{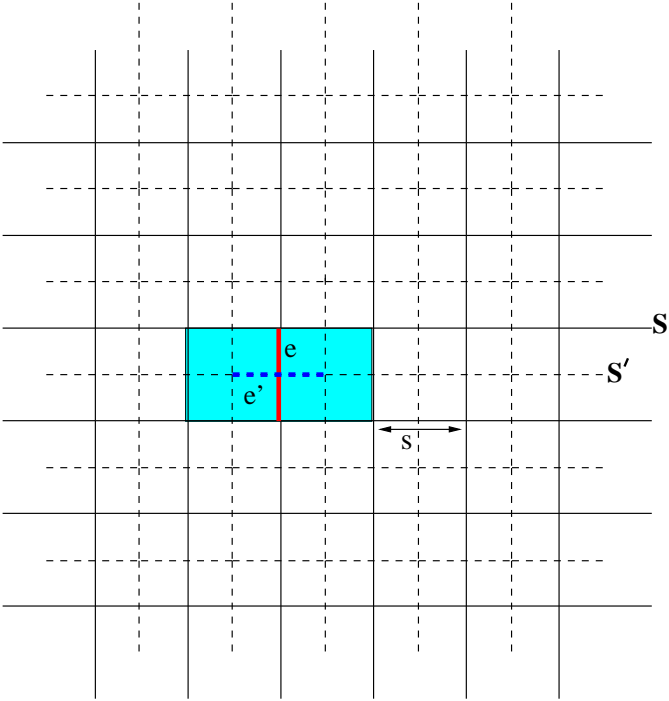}
\caption{Dual of the square lattice.}
\label{fig:duallattice}
\end{figure}

\subsection{Super-critical regime }
\label{sec:plsupc}
In this section, we show that for small enough $T$, there exists a closed interval $\Lambda_T^{u}\bydef [\lambda_T^{u1} \ \lambda_T^{u2}]$, such that  if  $\lambda \in \Lambda_T^{u}$, 
then the probability of percolation is greater than zero. For the proof provided in this section we need  that the signal attenuation function with distance $g(.)$ is monotonically decreasing and satisfies $\int_{0}^{\infty} xg(x) dx < \infty$. Clearly, $g(x) = x^{-\alpha}$ is not a valid signal attenuation function for this subsection. 

In this section we tile $\bbR^2$ into a square lattice, and define each edge to be {\it open} or {\it closed} to tie up the continuum percolation on the PSG with the percolation on the square lattice. Let $\bS$ be a square lattice with side $s\bydef \frac{1}{\sqrt{5}}g^{-1}\left(MT\right)$ as shown in Fig. \ref{fig:squarelattice}, where $M \in \bbR$ which will be chosen later. 
Let  $\bS' = \bS+(\frac{s}{2}, \frac{s}{2})$ be the dual lattice of $\bS$ obtained by translating each edge of $\bS$ by $(\frac{s}{2}, \frac{s}{2})$ as shown in Fig. \ref{fig:duallattice}. 
Any  edge $e$ of $\bS$ is defined to be open if there are one or more than one nodes of $\Phi$ in both the adjacent squares $S_1(e)$ and $S_2(e)$ as shown in Fig. \ref{fig:squarelattice}, and the interference received $I_j^i = \sum_{k\in \Phi, k\ne i }g(d_{kj})$ at any node $j\in S_1(e)\cup S_2(e)$ is less than $M, \ \forall \ i \in S_1(e)\cup S_2(e)$. Any edge of $\bS$ is defined to be closed if it is not open. Any edge $e' \in \bS'$ is defined to be open if and only if the corresponding edge $e \in \bS$ is open.
 Some important properties of $\bS$ and $\bS'$ are as follows.
\begin{defn} Open component of $\bS$ is the sequence of connected open edges of $\bS$. 
\end{defn}

\begin{lemma}\label{lem:int1} If the cardinality of the open component of $\bS$ containing the origin is infinite, then $|C_{x_1}|= \infty$. 
\end{lemma}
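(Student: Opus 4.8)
The plan is to realise each \emph{open} edge of $\bS$ as a small two-way clique in $SG(T)$ and then to splice these cliques along the infinite open component of the origin, obtaining an infinite set of nodes all reachable from $x_1$.

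The core step is a local one: \textbf{if an edge $e$ is open, then any two nodes $u,w\in\Phi\cap(S_1(e)\cup S_2(e))$ satisfy $\SIR_{uw}\ge T$ and $\SIR_{wu}\ge T$.} This is exactly where the choice $s=\frac{1}{\sqrt5}\,g^{-1}(MT)$ is used. The two squares $S_1(e)$ and $S_2(e)$ share the edge $e$, so their union is an $s\times 2s$ rectangle of diameter $\sqrt{s^{2}+(2s)^{2}}=\sqrt5\,s=g^{-1}(MT)$. Hence $d_{uw}\le g^{-1}(MT)$, and since $g$ is monotonically decreasing the useful signal obeys $g(d_{uw})\ge g(g^{-1}(MT))=MT$. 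Openness simultaneously forces the interference to satisfy $I_{w}^{u}<M$, so $\SIR_{uw}=g(d_{uw})/I_{w}^{u}>MT/M=T$; the symmetric bound gives $\SIR_{wu}\ge T$ as well. Writing $N(e):=\Phi\cap(S_1(e)\cup S_2(e))$, which is nonempty in each of the two squares by the first part of the openness condition, we conclude that $N(e)$ is a bidirectionally connected clique of $SG(T)$.

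I would then chain the cliques along the open component. When two open edges of the component meet at a corner they share one of the unit squares, and that square is occupied (being adjacent to an open edge); any node in it lies in both $N(e)$ and $N(e')$, so the two cliques sit in the same connected component of $SG(T)$. Propagating this overlap along the (connected) open component of the origin places every node of $\bigcup_{e}N(e)$, the union taken over the edges $e$ of the component, in a single connected component of $SG(T)$. Because the origin lies in a square that the component meets, $x_1\in N(e_0)$ for some component edge $e_0$; as all the connections above are bidirectional, every node of $\bigcup_{e}N(e)$ then belongs to the out-component $C_{x_1}$.

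Infinitude finishes the argument: $\bS$ is locally finite, so an infinite connected set of open edges is unbounded and therefore meets infinitely many distinct squares, each of which contributes at least one distinct node to $\bigcup_{e}N(e)$; hence $|C_{x_1}|=\infty$. The step I expect to be most delicate is precisely the chaining: I must guarantee that consecutive edges of the component hand off through a common occupied square so that the cliques genuinely overlap -- in particular along a straight run of collinear edges, where the two flanking squares of successive edges are distinct and the naive overlap is absent -- and then verify that these overlaps concatenate the one-hop links into a single directed chain emanating from $x_1$ rather than a disjoint family of cliques.
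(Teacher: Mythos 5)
Your proof takes essentially the same route as the paper's: an open edge makes all nodes of $\Phi$ in its two adjacent squares a mutually connected clique (the $s\times 2s$ rectangle $S_1(e)\cup S_2(e)$ has diameter $\sqrt5\,s=g^{-1}(MT)$, so the signal power is at least $MT$ while openness caps the interference at $M$, giving $\SIR\ge T$ in both directions), and an infinite open component through the origin then yields infinitely many nodes in $C_{x_1}$. You are in fact more careful than the paper on both halves: the paper misstates the pairwise distance bound as $\frac{1}{\sqrt5}g^{-1}(MT)$, i.e.\ as the side $s$ rather than the diameter $\sqrt5\,s$ (the conclusion $g(d_{uw})\ge MT$ is unaffected), and its entire treatment of the chaining is the one-sentence assertion that infinitely many connected open edges give infinitely many connected nodes of $\Phi$. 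The collinear-edge handoff you flag as delicate is therefore not something the paper resolves and you missed; it is a genuine subtlety the paper glosses over, and your diagnosis of why it is delicate is correct: the openness condition only bounds $I_j^i$ for transmitters $i\in S_1(e)\cup S_2(e)$, so a direct cross-clique $\SIR$ bound between the node sets of two consecutive collinear open edges does not follow immediately. Closing it would require either enlarging the constant in the definition of $s$ so that $g$ at the worst-case cross-edge distance still dominates $MT$ (together with an interference condition covering transmitters from the neighboring edge's squares), or an argument routed only through shared squares. Relative to the paper's own proof, your reconstruction is faithful, and where it goes further it correctly isolates the one step the paper leaves implicit.
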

\begin{proof} 
Note that if an edge $e\in \bS$ is open, then all the nodes lying in $S_1(e)\cup S_2(e)$ are connected to each other, since the distance between  any two of them is less than $\frac{1}{\sqrt{5}}g^{-1}\left(MT\right)$, and hence the signal power is greater than $MT$, while the interference power is less than $M$, implying that $\SIR_{ij} > T, \ x_i, x_j \in S_1(e)\cup S_2(e)$. Thus, if there are infinite number of connected open edges in $\bS$, then the number of connected nodes of $\Phi$ is also infinite.
\end{proof}

\begin{defn} A circuit in $\bS$ or $\bS'$ is a connected path of $\bS$ or $\bS'$ which starts and ends at the same point. 
A circuit in  $\bS$ or $\bS'$ is defined to be open/closed if all the edges on the circuit are open/closed in $\bS$ or $\bS'$. 
\end{defn}
\begin{lemma}\label{lem:int2}\cite{Grimmett1980} The open component of $\bS$  containing the origin is  finite if and only if  there is a closed circuit in $\bS'$  surrounding the origin.
\end{lemma}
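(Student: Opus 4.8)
The plan is to prove Lemma~\ref{lem:int2} by the classical planar self-duality argument, exploiting the one-to-one correspondence between the edges of $\bS$ and $\bS'$ set up in the construction: by definition each edge $e \in \bS$ is crossed perpendicularly by exactly one dual edge $e' \in \bS'$, each unit face of $\bS'$ is centered at a vertex of $\bS$, and $e'$ is open iff $e$ is open, hence $e'$ is closed iff $e$ is closed. I would first dispatch the easier \textbf{if} direction. Suppose there is a closed circuit $\Gamma$ in $\bS'$ surrounding the origin. By the Jordan curve theorem $\Gamma$ partitions $\bbR^2$ into a bounded interior, which contains the origin, and an unbounded exterior. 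Any open path of $\bS$ starting at the origin and reaching a vertex in the exterior of $\Gamma$, viewed as a continuous curve in the plane, must intersect $\Gamma$, and such an intersection occurs precisely where some primal edge $e$ of the path crosses a dual edge $e'$ lying on $\Gamma$. Since $\Gamma$ is closed, $e'$ is closed, and therefore $e$ is closed; but an open path contains no closed edge, a contradiction. Hence every vertex in the open component of the origin lies in the bounded interior of $\Gamma$, which contains only finitely many vertices of $\bS$, so that component is finite.

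For the converse \textbf{only if} direction I would build the dual boundary of the cluster. Let $C$ denote the open component of the origin in $\bS$ and assume $|C| < \infty$. To each vertex $v \in C$ associate the closed unit face of $\bS'$ centered at $v$, and let $D$ be the union of these faces. Because $C$ is finite and connected through open edges, $D$ is a bounded connected region whose topological boundary $\partial D$ is a union of edges of $\bS'$. I would then take the boundary of the unbounded component of $\bbR^2 \setminus D$; planar combinatorics shows this outer boundary is a single circuit $\Gamma$ of $\bS'$, and since $D$ contains the origin, $\Gamma$ surrounds it. It remains to check that every edge $e'$ of $\Gamma$ is closed. Each such $e'$ separates a face centered at some $u \in C$ from a face centered at some $v \notin C$, hence $e'$ crosses the primal edge $e$ joining $u$ and $v$. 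Were $e$ open, then $v$ would be joined to $u \in C$ by an open edge and would itself lie in $C$, a contradiction; therefore $e$, and with it $e'$, is closed, so $\Gamma$ is a closed circuit in $\bS'$ surrounding the origin.

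The step I expect to be the main obstacle is the topological bookkeeping in the converse: rigorously showing that the outer boundary of the blob $D$ is a single well-defined closed circuit of $\bS'$, rather than a disconnected or self-touching collection of dual edges, and that it genuinely encircles the origin rather than leaving it inside a hole of $D$. This requires some care with the planar topology of the lattice at vertices where the boundary could pinch, and uses the connectedness of $C$ to place the origin in the interior of $\Gamma$. The forward direction, by contrast, reduces cleanly to the Jordan curve theorem once the edge correspondence is fixed, so I would present it first and reserve most of the rigor for the boundary-circuit construction. Since the underlying combinatorial fact is exactly the self-duality of bond percolation on the square lattice, I would also note that the statement can be imported directly from~\cite{Grimmett1980}.
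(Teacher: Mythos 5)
The paper does not actually prove Lemma~\ref{lem:int2}; it imports it verbatim from \cite{Grimmett1980}, so there is no in-paper argument to compare against. Your proof is the standard planar self-duality argument and is correct: the Jordan-curve direction is clean, and you correctly identify that the substantive work in the converse is showing that the outer boundary of the union of dual faces over the finite cluster is a single closed circuit of $\bS'$ with the origin in its interior --- that combinatorial--topological fact is precisely what \cite{Grimmett1980} (and Kesten's more detailed treatment) supplies, and under the paper's definition of a circuit as a closed walk the possible pinch points of that boundary are harmless. One caution worth making explicit: the paper defines $\bS'$ by translating each edge of $\bS$ by $(\frac{s}{2},\frac{s}{2})$ and declares $e'$ open iff ``the corresponding'' $e$ is open, but the translate of a primal edge is \emph{not} the dual edge that crosses it, and the duality lemma needs the crossing pairing, which is the one you set up. So your restatement of the edge correspondence is not mere bookkeeping but a necessary tightening of the paper's phrasing for the argument to go through.
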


Next, we will show that for small enough $T$, $\exists \ \Lambda_T^u=[\lambda_T^{u1} \ \lambda_T^{u2}]$, such that if 
$\lambda \in \Lambda_T^u$, then probability of having a closed circuit in $\bS'$  surrounding the origin is less than one, and hence the probability of having an infinite open component of $\bS$ containing the origin is greater than zero. We take an approach similar to \cite{Dousse2006}.

Let $A_e=1$ if  there are one or more than one nodes of $\Phi$ in both the adjacent squares $S_1(e)$ and $S_2(e)$ of $e$, and zero otherwise. Similarly, let $B_e=1$ if   the interference $I_j^i$ received at any node $j \in S_1(e)\cup S_2(e)$ is less than $M, \ \forall \ i \in S_1(e)\cup S_2(e)$ and zero otherwise. 
Then by definition, an edge $e \in \bS$ is open if $C_e = A_e B_e=1$. Now we want to bound the probability of having a closed circuit surrounding the origin in $\bS$. Towards that end, we will first bound the probability of a closed circuit of length $n$, i.e. $P(C_1=0, C_2=0, \ldots, C_n=0), \ \forall \ n \in \bbN$ considering $n$ distinct edges. 
Let $p_A \bydef P(A_n=0)$ for any $n$. Since $\Phi$ is a PPP with density $\lambda$,  $p_A = 1-(1-e^{-\lambda s })^2$. Then we have the following intermediate results to upper bound $P(C_1=0, C_2=0, \ldots, C_n=0)$.
\begin{lemma}\label{lem:a} $P(A_1=0, A_2=0, \ldots, A_n=0) \le   p_1^{n}$, where $p_1 = p_{A}^{1/4}$. 
\end{lemma}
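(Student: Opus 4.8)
The plan is to extract from the $n$ given edges a subfamily on which the events $\{A_e=0\}$ become \emph{independent}, and then combine $p_A\le 1$ with the monotonicity of intersections. The entire difficulty is dependence: each $A_e$ is determined by the two squares $S_1(e),S_2(e)$ adjacent to $e$, so two edges sharing a square have correlated indicators and the naive product bound fails. The factor $\tfrac14$ in the exponent is exactly the price of removing this dependence.

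First I would recast the dependence combinatorially. Identify every edge $e\in\bS$ with the adjacency between its two neighbouring faces; then the edges of $\bS$ are precisely the edges of the grid graph $G$ whose vertices are the faces (squares) of $\bS$ and whose adjacencies are the horizontal/vertical face-neighbours. Two indicators $A_e,A_{e'}$ can be dependent only when $e,e'$ share a square, i.e. only when the two corresponding edges of $G$ meet at a common vertex. Since a PPP assigns independent point counts to disjoint regions, any family of edges that forms a \emph{matching} in $G$ (pairwise vertex-disjoint, hence pairwise square-disjoint, recalling that distinct faces have disjoint interiors) carries mutually independent events $\{A_e=0\}$, each of probability $p_A$.

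Next I would colour the edges with four colours so that each colour class is a matching. The grid $G$ is bipartite with maximum degree $4$, so by K\"onig's edge-colouring theorem its edges admit a proper $4$-edge-colouring, and every colour class is a matching. (One can also exhibit the colouring explicitly: colour an $x$-direction face-adjacency by the parity of its horizontal index, using two colours, and a $y$-direction face-adjacency by the parity of its vertical index, using two further colours; checking properness at each shared face confirms that the four classes are matchings.) Restricting this colouring to the $n$ edges at hand and applying the pigeonhole principle, some colour class $G_k$ contains at least $n/4$ of them, and within $G_k$ the associated squares are pairwise disjoint.

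Finally I would assemble the estimate. Intersecting fewer events can only increase the probability, so
\[
P(A_1=0,\dots,A_n=0)\le P\!\left(\bigcap_{e\in G_k}\{A_e=0\}\right)=\prod_{e\in G_k}P(A_e=0)=p_A^{\,|G_k|},
\]
the middle equality using independence within $G_k$ and $P(A_e=0)=p_A$. Because $0\le p_A\le 1$ and $|G_k|\ge n/4$, the map $x\mapsto p_A^{\,x}$ is non-increasing, whence $p_A^{\,|G_k|}\le p_A^{\,n/4}=(p_A^{1/4})^n=p_1^{\,n}$, which is the claim. I expect the only genuine obstacle to be the colouring step, namely verifying that four colours suffice and that each class is truly square-disjoint; once that is in place the probabilistic content is immediate from PPP independence on disjoint regions together with the elementary inequality $p_A^{\,|G_k|}\le p_A^{\,n/4}$.
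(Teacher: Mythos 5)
Your proof is correct and follows essentially the same route as the paper: both arguments extract at least $n/4$ of the given edges whose adjacent square-pairs are pairwise disjoint, invoke independence of the Poisson counts on disjoint regions, and conclude via $p_A^{n/4}=p_1^n$. The only difference is that the paper simply asserts the existence of such a subfamily as a "fact," whereas you justify it cleanly with a proper $4$-edge-colouring of the face-adjacency grid and pigeonhole — a welcome filling-in of the combinatorial step, not a different approach.
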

\begin{proof} Follows from the fact that in any sequence of  $n$ edges of $\bS $ there are at least $n/4$ edges such that their adjacent squares 
$S_1(e)\cup S_2(e)$ do not overlap. Therefore $P(A_1=0, A_2=0, \ldots, A_n=0) \le P(\cap_{e\in O}A_e=0)$, where $O$ is the set of edges for which their adjacent squares  $S_1(e)\cup S_2(e)$ have no overlap, and  $|O|=n/4$. Since $S_1(e)\cup S_2(e), \ e \in O$ have no overlap, and events $A_e=0$ are independent for $e\in O$,  the result follows. 
\end{proof}

\begin{lemma}\label{lem:b} \cite[Proposition 2]{Dousse2006} For $\int_{0}^{\infty} xg(x) dx < \infty$, $P(B_1=0, B_2=0, \ldots, B_n=0) \le  p_2^{n}$, where $p_2 \bydef e^{\left(\frac{2\lambda}{K}\int g(x) dx - \frac{M}{K} \right)}$, and $K$ is a constant. 
\end{lemma}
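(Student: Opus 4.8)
The plan is to reduce the joint event $\{B_1=0,\dots,B_n=0\}$ to a single large-deviation event for an additive shot-noise functional of the PPP $\Phi$, and then dispatch that event with a Chernoff bound evaluated through the Laplace functional of $\Phi$. The role of the hypothesis $\int_0^\infty x g(x)\,dx<\infty$ will be to guarantee that the relevant response function is summable over the lattice, which is exactly what keeps the exponential moment finite.

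First I would unpack the definition of $B_e$. Writing $I_j:=\sum_{k\in\Phi}g(d_{kj})$, a violation $B_e=0$ means there are $i,j\in R_e:=S_1(e)\cup S_2(e)$ with $I_j^i\ge M$, and since $I_j^i\le I_j$ this forces $I_j\ge M$. Because $g$ is monotonically decreasing, every $j\in R_e$ satisfies $I_j\le Y_e:=\sum_{k\in\Phi}g\big(d(x_k,R_e)\big)$, where $d(x_k,R_e)$ is the distance from $x_k$ to the region $R_e$. Hence $\{B_e=0\}\subseteq\{Y_e\ge M\}$, and therefore
\[
\{B_1=0,\dots,B_n=0\}\subseteq\Big\{\textstyle\sum_{e=1}^n Y_e\ge nM\Big\}.
\]
The payoff of this domination is that $\sum_e Y_e=\sum_{k\in\Phi}H(x_k)$ with $H(x):=\sum_{e=1}^n g\big(d(x,R_e)\big)$ is a genuine shot-noise sum: the troublesome ``there exists a node with large interference in each region'' has been turned into a deterministic functional of the point configuration, removing the supremum over random node positions.

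Next I would apply the exponential Markov inequality and the Laplace functional of the PPP: for a Chernoff parameter $\theta>0$,
\[
P\Big(\textstyle\sum_{e}Y_e\ge nM\Big)\le e^{-\theta nM}\,\bbE\Big[e^{\theta\sum_k H(x_k)}\Big]=\exp\Big(-\theta nM+\lambda\!\int_{\bbR^2}\!\big(e^{\theta H(x)}-1\big)\,dx\Big).
\]
Two lattice facts then close the estimate. Because the cells $R_e$ live on the square lattice $\bS$ of side $s$ and $\int_0^\infty x g(x)\,dx<\infty$ makes $g$ summable over a lattice, there is a finite constant $K$, depending only on $g$ and $s$, with $H(x)\le K$ uniformly; and $\int_{\bbR^2}H(x)\,dx=\sum_e\int_{\bbR^2}g(d(x,R_e))\,dx$ is bounded by $n\int_{\bbR^2}g(|z|)\,dz$ up to an $O(1)$ geometric factor that I would absorb into $K$. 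Choosing $\theta=1/K$ and linearizing via the convexity bound $e^{u}-1\le 2u$ on $0\le u\le \theta K=1$ (so $e^{\theta H}-1\le 2\theta H$) turns the exponent into $-nM/K+\tfrac{2\lambda}{K}\,n\int g$, which gives $P(\cdot)\le p_2^{\,n}$ with $p_2=\exp\big(\tfrac{2\lambda}{K}\int g(x)\,dx-\tfrac{M}{K}\big)$.

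The main obstacle is precisely the first reduction: converting the correlated, supremum-type events $\{B_e=0\}$ over overlapping regions into a single $\sum_k H(x_k)$ whose exponential moment factorizes. The monotonicity domination $I_j^i\le I_j\le Y_e$ and the interchange $\sum_e\sum_k=\sum_k\sum_e$ are what make this legitimate, and the finiteness of the uniform bound $K$ — hence the legality of linearizing $e^{\theta H}-1$ — is exactly where $\int_0^\infty x g(x)\,dx<\infty$ enters. I expect the only genuinely fiddly part to be the bookkeeping of the $O(1)$ geometric constants (the uniform bound $H\le K$ and the estimate of $\int H$), all of which can be folded into the single constant $K$; the probabilistic content is entirely contained in the Chernoff step combined with the PPP Laplace functional.
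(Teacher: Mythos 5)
Your argument is correct and is essentially the proof of the cited result: the paper itself does not prove this lemma but defers entirely to \cite[Proposition 2]{Dousse2006}, whose proof proceeds exactly as you describe --- dominate each $\{B_e=0\}$ by a shot-noise exceedance, rewrite $\sum_e Y_e$ as $\sum_{k\in\Phi}H(x_k)$, apply Chernoff with the Poisson Laplace functional, and use the uniform bound $H\le K$ (which is where $\int_0^\infty xg(x)\,dx<\infty$ together with the boundedness of $g$ assumed in this subsection enters) to linearize $e^{\theta H}-1$. The only loose ends are cosmetic: the $O(1)$ geometric factors you fold into the constants, and the implicit requirement $g(0)<\infty$, which the paper guarantees by excluding the singular path-loss model here.
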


\begin{lemma}\label{lem:c} \cite[Proposition 3]{Dousse2006} $P(C_1=0, C_2=0, \ldots, C_n=0) \le (\sqrt{p_1} + \sqrt{p_2})^n$. 
\end{lemma}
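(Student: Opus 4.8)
The plan is to exploit the factorization $C_e = A_e B_e$, which forces the equality of events $\{C_e = 0\} = \{A_e = 0\}\cup\{B_e = 0\}$, and then to decouple the (correlated) $A$ and $B$ contributions by a Cauchy--Schwarz argument so that Lemmas \ref{lem:a} and \ref{lem:b} can each be invoked separately. The square roots in the target bound are exactly the price paid for this decoupling.

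First I would rewrite the closed-circuit event for $n$ distinct edges. Since an edge is closed precisely when $A_e=0$ or $B_e=0$, distributing the intersection over the union yields
\[\bigcap_{e=1}^n \{C_e = 0\} = \bigcap_{e=1}^n\left(\{A_e=0\}\cup\{B_e=0\}\right) = \bigcup_{S\subseteq[n]}\left(\bigcap_{e\in S}\{A_e=0\}\cap\bigcap_{e\in S^c}\{B_e=0\}\right).\]
A union bound over the $2^n$ subsets $S$ then reduces the task to bounding each term $P\left(\bigcap_{e\in S}\{A_e=0\}\cap\bigcap_{e\in S^c}\{B_e=0\}\right)$.

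The key step is that the $A$-events and the $B$-events are not mutually independent, so their joint probability cannot simply be factored. Instead I would apply Cauchy--Schwarz to the product of indicators: writing $X=\bigcap_{e\in S}\{A_e=0\}$ and $Y=\bigcap_{e\in S^c}\{B_e=0\}$ and using $\mathbf{1}_X^2=\mathbf{1}_X$ gives $P(X\cap Y)=\bbE[\mathbf{1}_X\mathbf{1}_Y]\le\sqrt{P(X)\,P(Y)}$. I would first remark that Lemmas \ref{lem:a} and \ref{lem:b} hold verbatim for any sub-collection of edges (the proof of Lemma \ref{lem:a} extracts $|S|/4$ edges of $S$ with non-overlapping adjacent squares, and Lemma \ref{lem:b} is a per-edge interference estimate), so that $P(X)\le p_1^{|S|}$ and $P(Y)\le p_2^{\,n-|S|}$. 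Hence each term is at most $(\sqrt{p_1})^{|S|}(\sqrt{p_2})^{\,n-|S|}$.

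Finally, summing over all subsets and grouping by $|S|=k$ via the binomial theorem,
\[P(C_1=0,\ldots,C_n=0)\le\sum_{k=0}^n\binom{n}{k}(\sqrt{p_1})^{k}(\sqrt{p_2})^{\,n-k}=(\sqrt{p_1}+\sqrt{p_2})^n,\]
which is the claim. The main obstacle is precisely the statistical dependence between the occupancy events $A_e$ and the interference events $B_e$; the entire argument is engineered to sidestep their joint law, replacing it with the Cauchy--Schwarz splitting at the cost of the square roots that survive into the final expression.
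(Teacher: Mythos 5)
Your proof is correct and is essentially the argument of the cited source: the paper gives no proof of this lemma, importing it as Proposition~3 of \cite{Dousse2006}, whose proof is exactly your decomposition of $\{C_e=0\}$ into $\{A_e=0\}\cup\{B_e=0\}$, the union bound over the $2^n$ subsets, the Cauchy--Schwarz splitting $P(X\cap Y)\le\sqrt{P(X)P(Y)}$, and the binomial-theorem resummation. Your observation that Lemmas~\ref{lem:a} and \ref{lem:b} must be applied to arbitrary sub-collections of edges is the right (and needed) refinement, and it does hold for the same reasons as the stated versions.
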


Let $q \bydef (\sqrt{p_1} + \sqrt{p_2})$. Using the Peierl's argument, the next Lemma characterizes an upper bound on $q$ for which having a closed circuit in $\bS$  surrounding the origin is less than one.
\begin{lemma}\label{lem:suffcond} If $q < \frac{11-2\sqrt{10}}{27}$, then the probability of having a closed circuit in $\bS'$  surrounding the origin is less than one.
\end{lemma}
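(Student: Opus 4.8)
The plan is to run a \emph{Peierls contour argument} on the dual lattice $\bS'$. By Lemma~\ref{lem:int2}, the open component of $\bS$ at the origin is finite exactly when some closed circuit of $\bS'$ surrounds the origin, so it suffices to bound the probability of that event strictly below $1$. I would estimate this probability by a union bound over all circuits of $\bS'$ that encircle the origin, organized by their length $n$:
\begin{equation}
P(\exists \text{ closed circuit around } \b0) \;\le\; \sum_{n} N_n \, P(\text{a given length-}n \text{ circuit is closed}),
\end{equation}
where $N_n$ denotes the number of circuits of $\bS'$ of length $n$ surrounding the origin. A circuit of $\bS'$ of length $n$ is crossed by exactly $n$ distinct edges of $\bS$, and it is closed precisely when all of these $n$ edges satisfy $C_e=0$; hence Lemma~\ref{lem:c} supplies the uniform bound $P(\text{length-}n\text{ circuit closed}) \le q^n$ with $q=\sqrt{p_1}+\sqrt{p_2}$.

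The second ingredient is the combinatorial count of $N_n$, for which I would use the standard self-avoiding-polygon estimate. Any circuit surrounding the origin must cross the positive horizontal axis, and since a length-$n$ circuit stays within distance of order $n$ of the origin there are at most of order $n$ candidate anchor edges for this crossing. Rooting the circuit at such an edge turns it into a self-avoiding walk, of which there are at most $3^{n-1}$ once the first direction is fixed (connective constant $\le 3$). After the anchor and orientation bookkeeping this yields a bound of the form $N_n \le \tfrac49\, n\, 3^{n}$.

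Substituting both estimates, and enlarging the sum from the minimal circuit length to all $n\ge 1$ (a legitimate over-count since the added terms are nonnegative), gives
\begin{equation}
P(\exists \text{ closed circuit around } \b0) \;\le\; \tfrac49 \sum_{n\ge 1} n\,(3q)^n \;=\; \frac{4q}{3(1-3q)^2},
\end{equation}
which converges for $q<\tfrac13$ by $\sum_{n\ge1} n x^n = x/(1-x)^2$. The right-hand side is strictly less than $1$ precisely when $4q < 3(1-3q)^2$; writing $y=3q$ this is the quadratic inequality $9y^2-22y+9>0$, whose smaller root is $y=\tfrac{11-2\sqrt{10}}{9}$. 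Since the parabola opens upward and we are in the regime of small $y$, the inequality holds for $y<\tfrac{11-2\sqrt{10}}{9}$, i.e. $q<\tfrac{11-2\sqrt{10}}{27}$. Under this hypothesis the union bound is $<1$, so with positive probability no closed circuit of $\bS'$ encircles the origin, which is the claim (and, through Lemmas~\ref{lem:int2} and~\ref{lem:int1}, forces $|C_{x_1}|=\infty$ with positive probability).

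The step I expect to be the main obstacle is \emph{not} the dependence among the edge states: although the indicators $C_e$ are correlated through the shared interference field, Lemma~\ref{lem:c} already absorbs this into the clean per-edge factor $q=\sqrt{p_1}+\sqrt{p_2}$, so the union bound goes through verbatim. The real work is enumerating $N_n$ tightly enough to produce the sharp prefactor, together with the elementary but essential check that the resulting series is evaluated in closed form and compared against $1$ --- mere convergence for $q<\tfrac13$ is insufficient, since one needs the entire tail to stay below $1$. Extracting exactly the constant $\tfrac{11-2\sqrt{10}}{27}$ hinges on that prefactor, so the care lies entirely in the counting and in the algebra of the quadratic.
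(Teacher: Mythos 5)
Your proposal is correct and follows essentially the same route as the paper: a Peierls union bound over circuits of length $n$ surrounding the origin, the count $4n3^{n-2}$ of such circuits, the per-circuit bound $q^n$ from Lemma~\ref{lem:c}, the closed-form sum $\frac{4q}{3(1-3q)^2}$, and the quadratic yielding the threshold $\frac{11-2\sqrt{10}}{27}$. The only cosmetic difference is that you re-derive the circuit count via a self-avoiding-walk argument where the paper simply cites it.
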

\begin{proof} From \cite{Grimmett1980}, the number of possible circuits of length $n$ around the origin is less than or equal to  $4n3^{n-2}$. From Lemma \ref{lem:c}, we know that the probability of a closed circuit of length $n$ is upper bounded by $q^n$. Thus, 
\begin{eqnarray*}
P(\text{closed circuit around origin}) &\le & \sum_{n=1}^{\infty} 4n3^{n-2} q^n,\\
&=& \frac{4q}{3(1-3q)^2},
\end{eqnarray*}
which is less than $1$ for $q < \frac{11-2\sqrt{10}}{27}$.
\end{proof}

\begin{thm} For the PSG,  where the attenuation function $g(.)$ is monotonically decreasing and satisfies $\int xg(x)dx < \infty$, for small enough $T$, $\exists \ \Lambda_T^u=[\lambda_T^{u1} \ \lambda_T^{u2}]$, such that if 
$\lambda \in \Lambda_T^u$, then the probability of percolation on the PSG is greater than zero. 
\end{thm}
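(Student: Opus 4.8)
The plan is to assemble the chain of lemmas already established and then to exhibit, for small enough $T$, a nondegenerate interval of densities on which the hypothesis of Lemma~\ref{lem:suffcond} is met. By Lemma~\ref{lem:c} the single quantity controlling everything is $q = \sqrt{p_1} + \sqrt{p_2} = p_A^{1/8} + p_2^{1/2}$, and by Lemma~\ref{lem:suffcond} it suffices to force $q < \frac{11-2\sqrt{10}}{27}$ for all $\lambda$ in some $[\lambda_T^{u1},\lambda_T^{u2}]$. Once this is achieved, Lemma~\ref{lem:suffcond} gives that a closed circuit in $\bS'$ around the origin occurs with probability strictly less than one; Lemma~\ref{lem:int2} then converts this into a strictly positive probability that the open component of $\bS$ containing the origin is infinite; and Lemma~\ref{lem:int1} finally yields $P(|C_{x_1}| = \infty) > 0$, i.e. percolation. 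So the theorem reduces to a parameter-selection argument for $p_A$ and $p_2$.

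First I would fix a target split $\epsilon_1 + \epsilon_2 < \frac{11-2\sqrt{10}}{27}$ and a candidate interval $0 < \lambda_1 < \lambda_2 < \infty$, and control the two factors separately, exploiting their opposite monotonicities in $\lambda$. Since $p_2 = e^{\frac{2\lambda}{K}\int g(x)\, dx - \frac{M}{K}}$ is increasing in $\lambda$, for every $\lambda \le \lambda_2$ I can force $p_2^{1/2} < \epsilon_2$ by taking $M$ large, namely $M > 2\lambda_2 \int g(x)\, dx + 2K\ln(1/\epsilon_2)$; crucially, this choice of $M$ does not involve $T$. Since $p_A = 1 - (1-e^{-\lambda s})^2 \le 2e^{-\lambda s}$ is decreasing in $\lambda s$, for every $\lambda \ge \lambda_1$ I can force $p_A^{1/8} < \epsilon_1$ as soon as $s = \frac{1}{\sqrt 5}g^{-1}(MT)$ exceeds a threshold depending only on $\lambda_1$ and $\epsilon_1$.

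The role of \emph{small $T$} enters only at this last point. With $M$ already fixed, $MT \to 0$ as $T \to 0$; because $g$ is monotonically decreasing and (as $\int x g(x)\, dx < \infty$ forces) satisfies $g(x) \to 0$, the inverse $g^{-1}(MT)$ is well defined for small $T$ and diverges to $+\infty$. Hence $s \to \infty$, so there is a $T^{\star}$ such that for all $T < T^{\star}$ the required lower bound on $s$ holds; moreover decreasing $T$ only increases $s$ and decreases $p_A$, so a single interval $\Lambda_T^u := [\lambda_1, \lambda_2]$ works for every $T < T^{\star}$. Combining the two bounds gives $q < \epsilon_1 + \epsilon_2 < \frac{11-2\sqrt{10}}{27}$ uniformly over $\lambda \in \Lambda_T^u$, which is exactly the hypothesis of Lemma~\ref{lem:suffcond}, and the conclusion follows by the chain above.

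I expect the only genuinely delicate point to be the apparent tension between the two requirements: making $p_2$ small wants $M$ large, while making $p_A$ small wants $s = \frac{1}{\sqrt 5}g^{-1}(MT)$ large, i.e. $MT$ small. These do not actually conflict, because $s$ depends on $M$ and $T$ only through the product $MT$; fixing $M$ first (to kill $p_2$) and then sending $T \to 0$ (to kill $p_A$) decouples them, which is precisely why the statement is phrased for \emph{small enough} $T$ rather than for a fixed $T$. The remaining work is bookkeeping: keeping $\lambda$ bounded below by $\lambda_1 > 0$ (so squares are occupied and $p_A$ stays small) and above by $\lambda_2$ (so interference stays controlled and $p_2$ stays small), which is exactly what forces the conclusion to hold on an interval rather than for all $\lambda$.
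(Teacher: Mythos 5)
Your proof is correct and follows the same skeleton as the paper's --- the identical chain Lemma~\ref{lem:c} $\rightarrow$ Lemma~\ref{lem:suffcond} $\rightarrow$ Lemma~\ref{lem:int2} $\rightarrow$ Lemma~\ref{lem:int1}, reducing everything to forcing $q<\frac{11-2\sqrt{10}}{27}$ on an interval of densities --- but the final parameter selection is genuinely different, and arguably cleaner. The paper couples $M$ to $T$ by setting $M=1/T$, so that $MT=1$ and the lattice side $s=\frac{1}{\sqrt{5}}g^{-1}(1)$ is a fixed constant; it then drives $p_1$ to zero by taking $\lambda$ large, and argues (somewhat tersely) that for small $T$ the term $-M/K=-1/(TK)$ in the exponent of $p_2$ dominates $\frac{2\lambda}{K}\int g(x)\,dx$ over the relevant range of $\lambda$, so $p_2$ is also small there; in that construction the admissible interval $\Lambda_T^u$ depends on $T$. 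You instead decouple the two parameters: fix the target interval $[\lambda_1,\lambda_2]$ and a large $M$ independent of $T$ to kill $p_2$, then let $T\rightarrow 0$ so that $s=\frac{1}{\sqrt{5}}g^{-1}(MT)\rightarrow\infty$ and $p_A$ dies uniformly for $\lambda\ge\lambda_1$. Your version makes explicit the observation the paper leaves implicit --- that $s$ depends on $(M,T)$ only through the product $MT$, so ``$M$ large'' and ``$s$ large'' never conflict --- and it yields a single interval valid for every $T<T^{\star}$, at the cosmetic cost that the interval is chosen before $T$. Both arguments are sound; yours is the more transparent bookkeeping, and your aside that $\int xg(x)\,dx<\infty$ together with monotonicity forces $g(x)\rightarrow 0$ (so $g^{-1}(MT)$ is defined and diverges) is a detail the paper glosses over.
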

\begin{proof} From Lemma  \ref{lem:suffcond}, we know that if $q < \frac{11-2\sqrt{10}}{27}$, then the probability of having a closed circuit in $\bS'$ is less than $1$. Hence from Lemma \ref{lem:int2},  if $q < \frac{11-2\sqrt{10}}{27}$, then the probability of percolation on the PSG is greater than zero. Recall  that 
$q = \sqrt{p_1} + \sqrt{p_2}$, where  $p_1 = (1-(1-e^{-\lambda s })^2)^{1/4}$, $s=\frac{1}{\sqrt{5}}g^{-1}\left(MT\right)$, and $p_2 \bydef e^{\left(\frac{2\lambda}{K}\int g(x) dx - \frac{M}{K} \right)}$. Next, we show that $q$ can be made arbitrarily small for a closed interval $\Lambda_T^u=[\lambda_T^{u1} \ \lambda_T^{u2}]$ by appropriately choosing $M$ for small enough $T$. 
Let $M=1/T$, then $p_1$ does not depend on $M$ or $T$,  and $p_1$  decreases to zero with increasing $\lambda$. Moreover, for small enough $T$ with 
$M=1/T$, depending on $K$, $p_2$ can be made arbitrarily small for values of $\lambda$ for which $p_1$ is very small. Thus, for small enough $T$, there exists a value of $\lambda$ for which $q < \frac{11-2\sqrt{10}}{27}$. Moreover, since $q$ is a continuous function, there exists a closed interval $\Lambda_T^u=[\lambda_T^{u1} \ \lambda_T^{u2}]$ for which $q < \frac{11-2\sqrt{10}}{27}$, and consequently for $\lambda \in \Lambda_T^u$, the probability of percolation on the PSG is greater than zero.
\end{proof}

{\it Discussion:} In this section we mapped the continuum percolation on the SIR graph into discrete percolation on the square lattice. With a square lattice, it is known that if the probability of having a closed circuit around the origin is less than one, then with positive probability an unbounded connected cluster is present in the square lattice. 
Then with our mapping, for small enough $T$, we showed that 
the probability of having a closed circuit around the origin is less than one for a closed interval of node densities. 
Consequently, for small enough $T$, we  concluded that the connected cluster of the SIR graph is unbounded for a closed interval of node densities. Since percolation happens for all non-zero values of $\lambda$ at $T=0$, by showing that percolation happens for small enough $T$, our result loosely establishes the continuity of percolation at $T=0$.

Even though our result is only valid for small enough $T$, we expect that for any value of $T$, percolation can happen only for a "small" closed interval of node densities, if at all. The justification for this claim is that  for extremely small values of node densities, the minimum distance between nodes is large, and it is unlikely that SIR for large number of nodes is larger than $T$, while for extremely large values of node densities, interference is significant and it is difficult for sufficient number of nodes to have SIR greater than $T$.

After having established that percolation happens on the SIR graph for a small enough threshold $T$, the next natural question to ask is: whether the SIR graph is connected for small enough $T$, where by a connected graph we mean that there is a path for each node to any other node in the graph. Since the probability of SIR graph being connected in an infinite plane with any node density is zero, we restrict ourselves to 
an unit square where $n$ nodes are uniformly distributed, and ask the question whether the SIR graph restricted to an unit square is connected for small enough $T$ in the next Section.

\section{Connectivity on the SIR graph}
 For studying the SIR graph connectivity, we restrict ourselves to an 
unit square and assume that $n$ nodes of $\Phi_n = \{x_1, \dots, x_n\}$  are drawn independently from an uniform 
distribution on the unit square. Following Section \ref{sec:sys}, the SIR graph on the unit square is defined as $SG(T,1) \bydef \{\Phi_n, {\cal E}_n\}$, where ${\cal E}_n = \{(x_i,x_j) \ : \SIR_{ij} \ge T\}$, and 
$\SIR_{ij} \bydef \frac{g(d_{ij})} 
{\sum_{k\in \Phi_n, k\ne i }g(d_{kj})}$.
\begin{defn} The SIR graph $SG(T,1)$ is defined to be connected if there is a path from $x_i \rightarrow x_j \ \text{in} \ SG(T,1), \ \forall i,j = 1,2,\dots,n, i\ne j$.
\end{defn}

\begin{figure}
\centering
\includegraphics[width=3.5in]{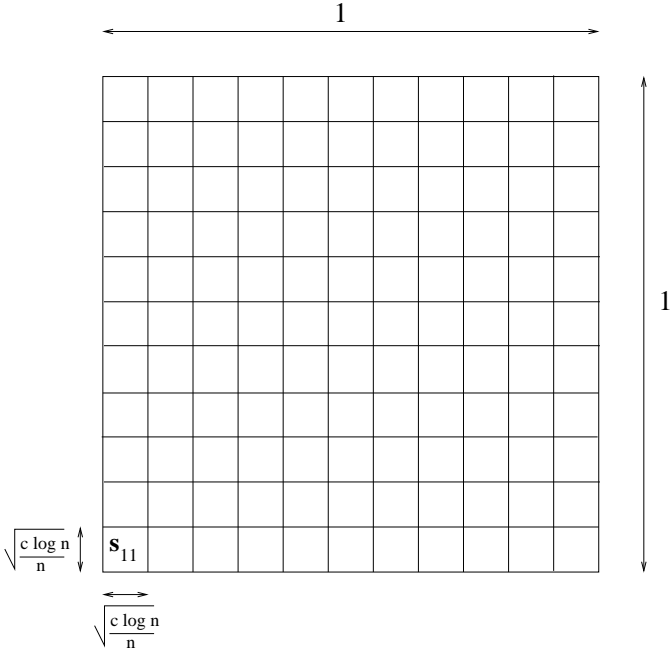}
\caption{Square tiling of the unit square.}
\label{fig:sqlattice}
\end{figure}

To analyze the connectivity of the SIR graph, we color the nodes of $\Phi$ with $C(n)$ colors, where nodes assigned different colors 
correspond to having orthogonal signals in either time or frequency. 
Graph coloring is a map ${\cal C}: \Phi \rightarrow [C(n)]$, such that ${\cal C}(x_m) = c(x_m), c(x_m)\in [C(n)]$. 
Coloring ensures that only those signals transmitted from the similarly colored nodes interfere with each other.
Then the colored SIR graph  is 
defined as 
$SG(T,1,C(n)) \bydef \{\Phi_n, {\cal E}_n\}$, where ${\cal E}_n = \{(x_i,x_j) \ : \SIR_{ij} \ge T\}$, and 
$\SIR_{ij} \bydef \frac{g(d_{ij})} 
{\sum_{k\in \Phi_n, k\ne i, c(x_k)=c(x_i) }g(d_{kj})}$, and $SG(T,1,C(n))$ is defined to be connected if there is a path from 
$x_i \rightarrow x_j \ \text {in} \ SG(T,1,C(n)) \ \forall, i,j = 1,2,\dots,n, i\ne j$. 

Note that  $SG(T,1) = SG(T,1,1)$. In the next Theorem, we find  an upper bound on $C(n)$ for which $SG(T,1,C(n))$ is connected. 


\subsection {Upper bound on $C(n)$}
For generality, we will prove the upper bound for the singular path-loss model $g(d_{ij}) = d_{ij}^{-\alpha}$, which easily extends to all other path-loss models with monotonically decreasing $g(.)$ and $\int xg(x) dx < \infty$. 
The main result of this subsection is as follows. 
\begin{thm}\label{thm:coloring} If $C(n) > 4 (1+\delta) c \log n$ colors are used for coloring the  SIR graph $SG(T,1, C(n))$, where $c$ and $\delta$ are independent of $n$, then the SIR graph $SG(T,1,C(n))$ is connected with high probability.  
\end{thm}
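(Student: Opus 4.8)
The plan is to reduce the SIR-connectivity of the $n$ points to a connectivity question on a square grid, exactly as suggested by the square tiling of the unit square in Fig.~\ref{fig:sqlattice}. I would partition the unit square into cells of side $s_n \bydef \sqrt{c\log n / n}$, so that there are about $n/(c\log n)$ cells, and then argue that with high probability (i) every cell contains at least one node, so that the adjacency graph of the cells --- which is a connected grid --- can be realized by genuine nodes, and (ii) the colouring can be chosen so that for every edge of the realized spanning structure the SIR exceeds $T$. Connectivity of $SG(T,1,C(n))$ then follows because every node reaches a node of its own cell, each cell is internally connected, and adjacent occupied cells are linked.

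For the occupancy step I would note that the number of points in a fixed cell is Binomial$(n, s_n^2)$ with mean $n s_n^2 = c\log n$. The probability that a given cell is empty is at most $e^{-c\log n} = n^{-c}$, and a union bound over the at most $n$ cells gives failure probability at most $n^{1-c} \to 0$ whenever $c > 1$; this is the origin of the $c\log n$ scale. A matching upper tail, via a Chernoff bound, shows that with high probability no cell contains more than $(1+\delta)c\log n$ points, which is what supplies the factor $(1+\delta)$ and, crucially, bounds the number of colours that any local colouring scheme can require.

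The colouring is the heart of the argument. I would colour the cells by their parity pair $(\text{row}\bmod 2,\ \text{col}\bmod 2) \in \{0,1\}^2$ and, inside each cell, give its (at most $(1+\delta)c\log n$) nodes distinct indices; the colour of a node is the pair (parity class, index). This uses at most $4(1+\delta)c\log n$ colours, which is exactly the stated bound. The point of the parity class is that any two equally-coloured nodes lie in cells of the same parity, hence on a sublattice of spacing $2s_n$, so every colour class is a point set whose mutual separation is of order $s_n$. For $\alpha>2$ the interference contributed by a single colour class at any receiver is then dominated by a convergent series $\sum_{m\ge 1} (\text{const}\cdot m)(m\, s_n)^{-\alpha} \le C_\alpha\, s_n^{-\alpha}$, while the useful signal across a same-cell or edge-adjacent-cell link (distance at most $\sqrt5\,s_n$) is at least $(\sqrt5\, s_n)^{-\alpha}$. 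Hence $\SIR$ is bounded below by the constant $(\sqrt5)^{-\alpha}/C_\alpha$, and by forcing a coarser separation of each colour class (at the cost only of the constant in front of $\log n$) one pushes $\SIR \ge T$. Intersecting the occupancy, counting, and SIR events by a union bound yields connectivity with high probability.

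The step I expect to be the main obstacle --- and the one deserving the most care --- is controlling the interference measured \emph{at the receiver} rather than at the transmitter, for the singular kernel $g(x)=x^{-\alpha}$. The colouring separates equally-coloured \emph{transmitters}, but a same-colour interferer can in principle fall geometrically close to the receiver $j$ of an adjacent-cell edge, and under the singular kernel a single such term can dominate the sum. Making this rigorous requires exploiting the packing (minimum-separation) structure that the parity colouring imposes on each colour class, treating the nearest few interferers separately, and balancing the number of colours against the separation needed to clear the prescribed threshold $T$; it is precisely this trade-off that pins down the constant multiplying $\log n$.
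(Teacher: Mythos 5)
Your construction --- the $\sqrt{c\log n/n}$ tiling, the four parity classes crossed with a per-cell index to give $4(1+\delta)c\log n$ colours, and the Chernoff/union bounds for cell occupancy and for the per-cell node count --- is the same as the paper's. However, the step you yourself flag as the main obstacle is a genuine gap, not a technicality, and the repair you sketch does not close it. Your route to global connectivity requires linking nodes in \emph{adjacent} cells. Take $x_t$ in cell $(i,j)$ transmitting to a receiver $x_u$ in the adjacent cell $(i,j+1)$. The cell $(i,j+2)$ lies in the same parity class as $(i,j)$ and is also adjacent to $(i,j+1)$; the node there carrying $x_t$'s colour can sit at distance $\epsilon$ from $x_u$ and contributes interference $\epsilon^{-\alpha}$, which is unbounded under the singular kernel. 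The packing structure you invoke controls the mutual distances \emph{between same-coloured transmitters}; it says nothing about the distance from a same-coloured transmitter to the \emph{receiver}, and ``treating the nearest few interferers separately'' cannot bound a single diverging term. So the SIR of an adjacent-cell link admits no constant lower bound, and the spanning structure built on adjacent occupied cells does not survive. (Even for bounded $g$ the adjacent-cell link only yields an SIR of order one that cannot be pushed above an arbitrary fixed $T$ by enlarging $c$, since the nearest same-coloured interferer and the intended transmitter can be at comparable distances from the receiver.)

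The missing idea, which is how the paper proceeds, is to shrink the claimed connectivity radius strictly below the cell size: introduce a second square $\bs_t(m)$ of side $\sqrt{m\log n/n}$ centred at each node $x_t$ with $m<c$, show by Chernoff that each such square contains at least $\frac{m}{2}\log n$ nodes w.h.p., and only claim that $x_t$ connects to nodes $x_u\in\bs_t(m)$. Because $x_u$ then lies within $\sqrt{m\log n/n}$ of $x_t$, every interferer sharing $x_t$'s colour is at distance at least $\sqrt{c\log n/n}-\sqrt{m\log n/n}>0$ from $x_u$; grouping the (at most $8$ per ring) same-coloured nodes into rings indexed by $q$ gives an interference bound proportional to $\bigl(\sqrt{c\log n/n}\bigr)^{-\alpha}\sum_{q\ge 1}q^{-\alpha}$, so $\SIR_{tu}$ is bounded below by a constant depending only on $c/m$, $\beta$ and $\alpha$, which can be pushed above $T$ by choosing $c$ and $m$ appropriately. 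Global connectivity then follows by chaining these overlapping small neighbourhoods rather than via an adjacency graph of cells. With that substitution your argument becomes the paper's proof.
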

\begin{proof}
Consider a $1\times 1$ square $\bS_1$.  We assume that $n$ nodes are distributed uniformly in $\bS_1$. We tile $\bS_1$ into 
smaller squares $\bs_{ij}$ with side $\sqrt{\frac{c \log n}{n}}$ as shown in Fig. \ref{fig:sqlattice}. Let the number of nodes lying in $\bs_{ij}$ be $|\bs_{ij}|$.  
Let the set of colors to be used be $C(n) \bydef \{c_1, c_2, c_3, c_4\}$, where $|c_{\ell}| = (1+\delta) c \log n, \ \ell=1,2,3,4$, and 
$c_{\ell} \cap c_{k} = \phi, \forall \ \ell, k$. Colors from set $c_1$ and $c_2$ are associated with alternate rows in odd numbered columns, while sets $s_3$ and $s_4$ are associated with alternate rows in even numbered columns  in the tilting  of $\bS_1$ using $\bs_{ij}$ as shown in Fig. \ref{fig:colorsqlattice}. Nodes in each smaller square $\bs_{ij}$ are colored as follows.
Let the nodes lying in each $\bs_{ij}$ be indexed using 
numbers  $1$ to $|\bs_{ij}|$. Then we associate $(1+\delta) c \log n$ colors to each $\bs_{ij}$ in a regular fashion, i.e. color of node 
$p, \ p=1,\dots, |\bs_{ij}|$ is  $ p \mod (1+\delta) c \log n$. Since $\bbE\{|\bs_{ij}|\} = c \log n, \ \forall \ i, j$, from the Chernoff bound,  $P\left(|\bs_{ij}| > (1+\delta) c \log n\right) \le n^{\frac{-c\delta^2}{3}}$. Hence with this coloring, the probability that there are two or more nodes with the same color in a given square $\bs_{ij}$ is
\begin{equation}\label{chernoffatmax}
P(\text{two nodes with the same color in}\ \bs_{ij}) \le n^{\frac{-c\delta^2}{3}}.
\end{equation}

%

\begin{figure}
\centering
\includegraphics[width=2.5in]{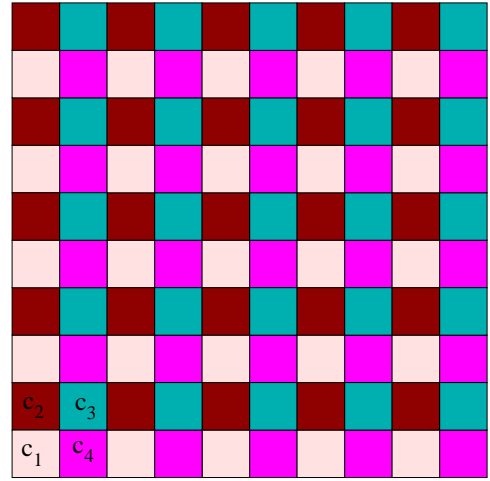}
\caption{Coloring the square tiling of the unit square with four sets of colors.}
\label{fig:colorsqlattice}
\end{figure}

Consider another square $\bs_{t}(m)$ with side $\sqrt{\frac{m \log n}{n}}$ centered at any node $x_t$  as shown in Fig. \ref{fig:intfsqlattice}, where $m<c$ is a constant. 
Again, using the Chernoff bound as above, we  have that \begin{equation}\label{chernoffatleast} P\left(|\bs_{t}(m)| < \frac{m}{2} \log n\right) \le n^{-2}.\end{equation}

Now define events $E_{ij} = \{\text{two nodes with the same color in} \  \bs_{ij}\}$, and \[F_{t}(m) = \{\text{there are less than} \ \frac{m}{2} \log n \  \text{nodes in}  \ \bs_{t}(m)\}.\] Then the probability that the SIR graph is connected $P(SG(T,1,C(n))\ \text{is connected})$ can be written as $P(SG(T,1,C(n))\ \text{is connected}) $
\begin{eqnarray*} \nn
&=& P(\cap_{ij}E_{ij} \cup \cap_{t} F_{t}(m))\\
&&P\left(SG(T,1,C(n))\ \text{is connected} \ | \cap_{ij}E_{ij} \cup \cap_{t} F_{t}(m) \right) \\ \nn
&&+\  P(\left(\cap_{ij}E_{ij} \cup \cap_{t} F_{t}(m)\right)^c)\\
&&P\left(SG(T,1,C(n))\ \text{is connected} \ | \left( \cap_{ij}E_{ij} \cup \cap_{t} F_{t}(m)\right)^c \right).
\end{eqnarray*}

Using the union bound over all squares $\bs_{ij}$, and over all nodes $x_t$, together with (\ref{chernoffatmax}), and (\ref{chernoffatleast}), it follows that $ P(\cap_{ij}E_{ij} \cup \cap_{t} F_{t}(m)) \le n^{-1} + n^{1-\frac{c \delta^2}{3}}$. Thus, for large enough $n$, 
\[P(SG(T,1,C(n))\ \text{is connected})\sim \]\[\ \ \ \ \  \ P\left(SG(T,1,C(n))\ \text{is connected} \ | \left(\cap_{ij}E_{ij} \cup \cap_{t} F_{t}(m)\right)^c \right).\] 
Hence in the sequel, we analyze the SIR connectivity while conditioning on the event that no square $\bs_{ij}$ has more than two nodes with the same color, and each square $\bs_{t}(m)$ has at least $\frac{m}{2} \log n$ nodes.

\begin{figure*}
\centering
\includegraphics[width=4.5in]{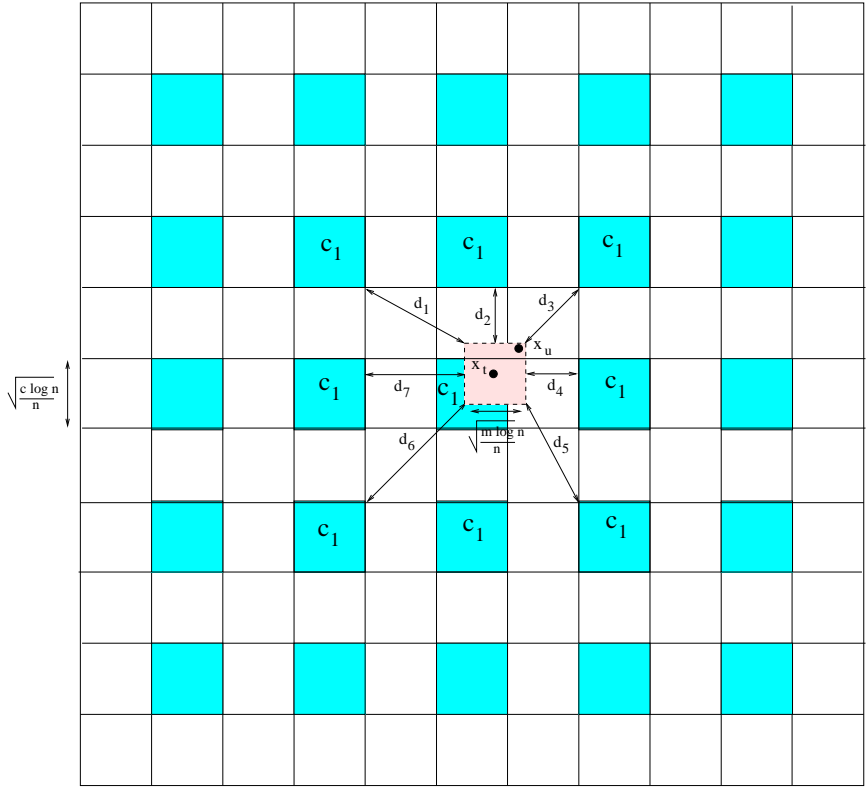}
\caption{Pictorial description of distance from nearest interferers after coloring.}
\label{fig:intfsqlattice}
\end{figure*}
Now, under the conditioning, to show that the SIR graph is connected, it is sufficient to show that for any $t=1,\dots, n$, $x_t$ is connected to all nodes in $\bs_{t}(m)$ in the SIR graph. 
Towards that end, let $x_u, u\ne t$ be any other node in $\bs_{t}(m)$. Then the distance between $x_t$ and $x_u$, $d_{tu}$, is upper bounded by 
$ \sqrt{\frac{2m \log n}{n}}$, thus the signal power $d_{tu}^{-\alpha} \ge \left(\frac{2m \log n}{n}\right)^{-\alpha/2}$. 
Now consider Fig. \ref{fig:intfsqlattice} for analyzing the interference power at $x_u$. Without loss of generality assume that $x_t$ belongs to the square associated with color set $c_1$.
Note that even if $x_t$ and $x_u$ belong to the same square $\bs_{ij}$, there is no other node in $\bs_{ij}$ that has the same color as $x_t$. So the interference received at $x_t$ is attributed to nodes lying in square $\bs_{i'j'}$ associated with color set $s_1$, where either 
$i'\ne i$ or $j'\ne j$. From Fig. \ref{fig:intfsqlattice}, it is clear that for any $q, \ q=1,2,\dots,n,$ there are maximum $8$ nodes using the same color as $x_t$, at a distance at least $2q \left(\sqrt{\frac{c \log n}{n}} - \sqrt{\frac{m \log n}{n}}\right)$ from $x_u$, since $d_1, \dots, d_7 \ge \left(\sqrt{\frac{c \log n}{n}} - \sqrt{\frac{m \log n}{n}}\right)$. Thus, the  interference power 
\[ \sum_{v\ne t, c(x_v) = c(x_t)} d_{vu}^{-\alpha} \le \sum_{q=1}^n \frac{8}{\left(2q \left(\sqrt{\frac{c \log n}{n}} - \sqrt{\frac{m \log n}{n}}\right)\right)^\alpha}. \]
Since $m<c$ is a constant, \[\left(\sqrt{\frac{c \log n}{n}} - \sqrt{\frac{m \log n}{n}}\right) \ge \beta \left(\sqrt{\frac{c \log n}{n}}\right)\] for some $\beta >0$. Hence 
\begin{equation} 
\sum_{v\ne t, c(x_v) = c(x_t)} d_{vu}^{-\alpha} \le \frac{8}{(2 \beta)^{\alpha} \left(\frac{c \log n}{n}\right)^{\alpha/2}} \sum_{q=1}^n  q^{-\alpha}.
\end{equation}
Since $\sum_{q=1}^n  q^{-\alpha}$ converges for $\alpha \ge 2$, let  $c_5 \bydef \sum_{q=1}^n  q^{-\alpha}$. Then, computing the 
SIR, we have 
\[\SIR_{tu} = \frac{d_{tu}^{-\alpha}}{\sum_{v\ne t, c(x_v) = c(x_t)} d_{vu}^{-\alpha}} \ge \frac{ c^{\alpha/2} \left(\frac{m}{2\beta}\right)^{-\alpha/2}   } {8c_5},\] which can be made more than $T$, the SIR threshold, by appropriately choosing $c$ and $m$. For example, for $\alpha =2$, for $c > \frac{3\beta T}{2\pi^2m}$, $SIR_{tu} > T$.
Thus,  for an appropriate choice of $c$ and $m$, $P\left(SG(T,1,C(n))\ \text{is connected} \ | \left(\cap_{ij}E_{ij} \cup \cap_{t} F_{t}(m)\right)^c \right) =1$, and  \[\lim_{n\rightarrow \infty }P(SG(T,1,C(n))\ \text{is connected}) =1.\]
\end{proof}

{\it Discussion:} Theorem \ref{thm:coloring} implies that ${\cal O}(\log n)$ colors are sufficient for guaranteeing the connectivity of $SG(T,1,C(n))$ with high probability.  The intuition behind this result is that if only $n/log(n)$ nodes interfere with any node's transmission then the total interference received at any node is bounded with high probability, and each node can connect to a large number of nodes. In the next  subsection we show that actually $C(n) = {\cal O}(\log n)$ colors are also necessary for the $SG(T,1,C(n))$ to be connected with high probability, and if $C(n)$ is less than order $\log n$, then the interference power can be arbitrarily large and difficult to bound, making  
$SG(T,1)$ disconnected with high probability. 

\begin{rem} Recall that SIR connectivity has been studied in \cite{Gupta2000} under the physical model, where it is shown that if simultaneously transmitting nodes are at least $\Delta$ distance away, then all the nodes within a fixed radius from the active transmitters have SIR's greater than the specified threshold for large enough $\Delta$. The result of \cite{Gupta2000}, however, is valid only for $\alpha >2$. In comparison, our result is valid for all $\alpha$ for which $\sum_{n=1}^{\infty} n^{-\alpha}$ is finite. Our approach is similar to SIR connectivity analysis of the one-dimensional case \cite{Avin2010}, where $n$ nodes are uniformly distributed in the unit interval. 
\end{rem}

\subsection {Lower bound on $C(n)$}\label{sec:lbconn}
In this section we show that if less than order $\log(n)$ colors are used, then the SIR graph is disconnected with high 
probability. To show this, we actually show that any node is not connected to any other node with high probability if 
less than order $\log(n)$ colors are used. For proving this lower bound we will restrict ourselves to path-loss models with monotonically decreasing $g(.)$ and $\int xg(x) dx < \infty$, since with singular path-loss models, $g(d_{ij}) = d_{ij}^{-\alpha}$, the signal power between any two nodes cannot be bounded.
Formally, our result is as follows.

 \begin{thm} For path-loss models with monotonically decreasing $g(.)$ and $\int xg(x) dx < \infty$, if $C(n) = \frac{Tf(n)}{\omega}$, where $\lim_{n\rightarrow \infty} \frac{f(n)}{\log n} = 0$, i.e. $C(n)$ is sub-logarithmic in $n$, and $\omega$ is a constant,  then the SIR graph $SG(T,1,C(n))$ is not connected with high probability.  
\end{thm}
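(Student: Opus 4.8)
The plan is to prove disconnection by exhibiting, with high probability, a single node that has no outgoing edge in $SG(T,1,C(n))$; since such a node cannot reach any other node, the graph fails to be connected in the sense of the directed definition. The starting point, and the reason the singular model is excluded here, is that for a non-singular attenuation function the received signal power is bounded: since $g$ is monotonically decreasing, $g(d_{ij})\le g(0)=:g_{\max}<\infty$ for every pair. Consequently a necessary condition for the edge $(x_i,x_j)$ to exist is
\[
\sum_{k\ne i,\,c(x_k)=c(x_i)} g(d_{kj})\;\le\;\frac{g(d_{ij})}{T}\;\le\;\frac{g_{\max}}{T}.
\]
So it suffices to find one node $x_i$ whose own color class creates interference larger than $g_{\max}/T$ at \emph{every} other node $x_j$.

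First I would lower-bound this interference uniformly over all receivers, by the observation that all $n$ nodes lie in the unit square, whose diameter is $\sqrt2$. Hence for any receiver $x_j$ every node $x_k$ of color $c(x_i)$ satisfies $d_{kj}\le\sqrt2$, so $g(d_{kj})\ge g(\sqrt2)>0$. Writing $L_c$ for the size of color class $c$, this yields the coloring-independent bound
\[
\sum_{k\ne i,\,c(x_k)=c(x_i)} g(d_{kj})\;\ge\;\bigl(L_{c(x_i)}-1\bigr)\,g(\sqrt2),
\]
valid for every $x_j$ at once. Thus any node whose color class has size $L>1+g_{\max}/\bigl(T\,g(\sqrt2)\bigr)$ violates the necessary condition for all $x_j$ and is isolated. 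It then remains only to produce a large class: by pigeonhole the largest class has size at least $n/C(n)=\omega n/\bigl(Tf(n)\bigr)$, and since $f(n)=o(\log n)$ implies $C(n)=o(n)$, we have $n/C(n)\to\infty$, which eventually exceeds the constant threshold above. Picking any node in the largest class produces an isolated node, so $SG(T,1,C(n))$ is disconnected for all large $n$, and the claim holds with high probability (indeed deterministically, once the points are placed in the square).

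The main obstacle to watch is obtaining an interference bound that is simultaneously nontrivial for a bounded $g$ — where the signal cannot be made to dominate — and uniform over all admissible colorings, which may depend on the realized node positions. The elementary diameter estimate above settles both, precisely because it counts every same-colored interferer regardless of location and so requires no concentration argument. If instead one insists on a local estimate — counting only same-colored interferers inside a tile of side $\sqrt{\log n/n}$ around $x_j$ and invoking a Chernoff bound exactly as in the proof of Theorem~\ref{thm:coloring} — then the high-probability language becomes genuinely necessary: a typical tile holds on the order of $\log n$ nodes, so some color occupies at least about $\log n/C(n)\to\infty$ of them, all within distance $O(\sqrt{\log n/n})\to0$, giving local interference at least on the order of $(\log n/C(n))\,g(0)\to\infty$ and reproducing the conclusion while making the role of the sub-logarithmic growth of $C(n)$ explicit. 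I would finally remark that the diameter argument in fact forces disconnection for every $C(n)=o(n)$, so the sub-logarithmic hypothesis of the theorem is comfortably sufficient; the $o(\log n)$ threshold is what matches the $\cO(\log n)$ sufficiency of Theorem~\ref{thm:coloring}, yielding the intended $\Theta(\log n)$ characterization.
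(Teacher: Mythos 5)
Your proof is correct, but it takes a genuinely different and more elementary route than the paper's. The paper localizes: it tiles the unit square into cells of side $\sqrt{\log n/n}$, uses a Chernoff bound to show every cell contains on the order of $\log n$ nodes with high probability, concludes that some color is reused by at least $\omega/T$ nodes inside a single cell, and then compares the signal $g(d_{k\ell})$ against the interference $(\omega/T-1)\,g\bigl(d_{k\ell}+\sqrt{2\log n/n}\bigr)$ from those co-located same-colored nodes. You instead globalize: every same-colored interferer lies within the diameter $\sqrt{2}$ of the square, so the interference at \emph{every} receiver is at least $(L-1)\,g(\sqrt{2})$ for a color class of size $L$, while the signal is capped at $g(0)$; pigeonhole then hands you a class of size $n/C(n)\to\infty$ and hence an isolated node, deterministically, with no concentration argument. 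What your version buys is brevity, a deterministic rather than merely high-probability conclusion, and a strictly stronger statement --- disconnection for every $C(n)=o(n)$ --- which incidentally shows that the paper's parenthetical claim that Theorem~\ref{thm:coloring} ``easily extends'' to bounded attenuation functions cannot stand: for bounded $g$ with $g(\sqrt{2})>0$, connectivity in fact forces $C(n)=\Omega(n)$ colors. What the paper's local version buys is slightly more generality in $g$: your diameter bound is vacuous if $g$ vanishes at distance $\sqrt{2}$ (a monotonically decreasing, integrable $g$ may have bounded support), whereas interferers confined to a common cell of side $\sqrt{\log n/n}$ sit at distances tending to zero, so only positivity of $g$ near the origin is needed; the fallback you sketch (counting same-colored nodes inside a tile and invoking Chernoff) closes exactly this gap and is essentially the paper's argument. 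Either way, the theorem as stated is established.
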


\begin{proof}
To show that $C(n) = {\cal O}( \log n)$ is necessary for guaranteeing the connectivity of $SG(T,1,C(n))$ with high probability, similar to last subsection, we consider the tiling of  the unit square $\bS_1$ by squares $\bs_{ij}$, but with  side $\sqrt{\frac{\log n}{n}}$, instead of $\sqrt{\frac{c\log n}{n}}$ as shown in Fig. \ref{fig:squarelattice}.

With this tiling, the expected number of nodes in any square $\bbE\{|\bs_{ij}|\} = \log n$, and $P(|\bs_{ij}| < (1-\delta) \log n) \le n^{-\delta^2/2}$, for any $0<\delta<1$. Therefore with $C(n) = \frac{T f(n)}{\omega}$ colors, where $\lim_{n\rightarrow \infty} \frac{f(n)}{\log n} = 0$, with high probability, there are at least $\omega/T$ nodes in each square using one particular color $c_p\in C(n)$. Let $\Phi_{c_p} = \{x_m :  c(x_m) = c_p, x_m \in \bs_{ij}\}$ be the set of nodes in square $\bs_{ij}$ that use the  color $c_p$.  Note that $|\Phi_{c_p}| > \omega/T$ with high probability. Consider two nodes $x_k, x_{m} \in \Phi_{c_p}$, and any other node $ x_{\ell} \in \bs_{ij}$. By the definition of $\bs_{ij}$,
the distance between node $x_m$ and $x_{\ell}$, $d_{m\ell}$ is no more than $d_{k\ell} + \sqrt{\frac{2\log n}{n}}$ . Therefore the interference received at $x_{\ell}$ from nodes inside $\bs_{ij}$ using color $c$ is 
$\sum_{x_m\in \Phi_{c_p}, m\ne k} g\left(d_{m\ell}\right)$ which is greater than $ (\omega/T-1) g\left(d_{k\ell} + \sqrt{\frac{2\log n}{n}}\right)$ since $|\Phi_{c_p}| > \omega/T$. Thus the SIR between $x_k$ and $x_{\ell}$ is 
\begin{eqnarray*}
\SIR_{k\ell} &\le & \frac{g(d_{k\ell})}{(\omega/T-1) g\left(d_{k\ell} + \sqrt{\frac{2\log n}{n}}\right)}.
\end{eqnarray*}
Since $g(.)$ is bounded, choosing $\omega$ appropriately, $\SIR_{k\ell} < T$. Thus, we have shown that node $x_k$ is not connected to any node inside $\bs_{ij}$. Similarly, it follows that $x_k \in \bs_{ij}$ is not connected to any node outside of $\bs_{ij}$, since for $x_p \notin \bs_{ij}$, the signal power $g(d_{kp})$ is less 
compared to $g(d_{k\ell})$ the signal power at any node $x_{\ell} \in \bs_{ij}$, while the interference powers at $x_{p} \notin \bs_{ij}$ and $x_{\ell} \in \bs_{ij}$ are identical. Thus, we conclude that if less than order $\log n$ colors are used, then $SG(T,1,C(n))$ is not connected with high probability.
\end{proof}

{\it Discussion:} In this subsection we  showed that if less than order $\log n$ colors are used, then the SIR graph $SG(T,1,C(n))$ is disconnected with high probability. This result holds for any SIR threshold $T$, and hence even for small enough $T$, the  SIR graph cannot be connected by using a single color. This result is in contrast to our percolation result where we showed that for small enough $T$, percolation happens for a closed interval of node densities.


\section{Conclusion}
In this paper we studied the percolation and connectivity properties of the SIR graph. The analysis is complicated 
since the link formation between any two nodes depends on all the other nodes in the network (through their 
interference contribution) and entails infinite range dependencies. For studying percolation on the SIR graph, we 
tied up the continuum percolation on the SIR graph to discrete percolation for which prior results are known. For 
finding a sub-critical regime, we made use of the hexagonal lattice, while for the super-critical regime percolation on 
the square lattice is considered. We showed the existence of a closed interval of node 
intensities for which the SIR graph percolates or not depending on the SIR threshold. 
Ensuring connectivity is a stricter condition compared to percolation, since with connectivity every pair of nodes should have a path between them. We took the graph coloring approach for studying connectivity on the SIR graph, and found 
upper and lower bounds on the number of colors required for guaranteeing connectivity with high probability. The derived upper and lower bounds are tight, and from which we conclude that using colors that are logarithmic in the number of nodes is necessary and sufficient for ensuring connectivity in the SIR graph with high probability.

\bibliographystyle{../../IEEEtran}
\bibliography{../../IEEEabrv,../../Research}

\end{document}